
\documentclass[fleqn]{llncs}
\usepackage{pgamd}

\pagestyle{plain}
\raggedbottom
\setlength{\marginparwidth}{64pt}

\title{Programming an Interpreter Using\\ Molecular Dynamics%
       \thanks{This research was partly carried out in the framework of
               the Jacquard-project Symbiosis, which is funded by the
               Netherlands Organisation for Scientific Research (NWO).}}
\author{J.A. Bergstra\inst{1}\fnmsep\inst{2}
        \and
        C.A. Middelburg\inst{1}
       }
\institute{Programming Research Group,
           University of Amsterdam, \\
           P.O.~Box~41882, 1009~DB~Amsterdam, the Netherlands \\
           \and
           Department of Philosophy,
           Utrecht University, \\
           P.O.~Box~80126, 3508~TC~Utrecht, the Netherlands \\
           \email{J.A.Bergstra@uva.nl,C.A.Middelburg@uva.nl}
          }

\begin{document}
\maketitle

\begin{abstract}
\sloppy
\PGA\ (ProGram Algebra) is an algebra of programs which concerns
programs in their simplest form: sequences of instructions.
Molecular dynamics is a simple model of computation developed in the
setting of \PGA, which bears on the use of dynamic data structures in
programming.
We consider the programming of an interpreter for a program notation
that is close to existing assembly languages using \PGA\ with the
primitives of molecular dynamics as basic instructions.
It happens that, although primarily meant for explaining programming
language features relating to the use of dynamic data structures, the
collection of primitives of molecular dynamics in itself is suited to
our programming wants.
\\[1.5ex]
{\sl Keywords:}
interpreter, program algebra, molecular dynamics,
thread algebra, action transforming service.
\\[1.5ex]
{\sl 1998 ACM Computing Classification:}
D.1.4, D.3.1, D.3.4, F.1.1, F.3.2.
\end{abstract}

\section{Introduction}
\label{sect-intro}

In this paper, we consider the programming of an interpreter for a
program notation that is close to existing assembly languages using
\PGA\ (ProGram Algebra).
With that we carry on the line of research with which a start was made
in~\cite{BL02a}.
The object pursued with that line of research is the development of a
theoretical understanding of possible forms of sequential programs,
starting from the simplest form of sequential programs, and associated
ways of programming.
The view is taken that sequential programs in the simplest form are
sequences of instructions.
\PGA, an algebra of programs in which programs are looked upon as
sequences of instructions, is taken for the basis of the development
aimed at.
Until now, the research was rather focussed on the development of a
theoretical understanding of possible forms of sequential programs.
The work presented in the current paper is primarily concerned with
programming using the simplest form of sequential programs and molecular
dynamics.

For the programming of the interpreter, we use \PGA\ with the primitives
of molecular dynamics as basic instructions.
Molecular dynamics is a simple model of computation bearing on the use
of dynamic data structures in programming.
In this model, states of computations resemble collections of molecules
composed of atoms and computations take place by means of actions which
transform the structure of molecules like in chemical reactions.
Molecular dynamics has been developed mainly in the setting of \PGA.
The model introduced in the current paper is a small expansion of the
model that was first described informally in~\cite{BB02a} and formally
in~\cite{BM06c}.

In the line of research carried on, the view is taken that the
behaviours exhibited by sequential programs on execution are threads as
considered in \BTA\ (Basic Thread Algebra).%
\footnote
{In~\cite{BL02a}, basic thread algebra is introduced under the name
 basic polarized process algebra.
 Prompted by the development of thread algebra~\cite{BM04c}, which is a
 design on top of it, basic polarized process algebra has been renamed
 to basic thread algebra.
}
A thread proceeds by doing steps in a sequential fashion.
A thread may do certain steps for having itself affected by some service
or affecting some service.
In the current paper, we will use a generalization of the use mechanism
introduced in~\cite{BM04c} and a complementary mechanism of that
mechanism for such interactions between threads that are behaviours
exhibited by sequential programs on execution and services that deal
with steps that relate to molecular dynamics.
A slightly different generalization of the use mechanism
from~\cite{BM04c} is introduced in~\cite{BM07f} under the name action
transforming thread-service composition.

A hierarchy of program notations rooted in \PGA\ is introduced
in~\cite{BL02a}.
In the current paper, we consider the programming of an interpreter for
one program notation that belongs to this hierarchy.
The program notation in question, called \PGLD\ (ProGramming
Language~D), is a very simple program notation which is close to
existing assembly languages.
The hierarchy also includes a program notation, called \PGLS\
(ProGramming Language for Structured programming), that supports
structured programming by offering a rendering of conditional and loop
constructs instead of (unstructured) jump instructions.
Each \PGLS\ program can be translated into a semantically equivalent
\PGLD\ program.
In~\cite{BM07e}, a variant of \PGLD\ with indirect jump instructions is
introduced.
In the current paper, we show how to adapt the interpreter for \PGLD\ to
the presence of indirect jump instructions.

This paper is organized as follows.
First, we review \BTA, \PGA, and \PGLD\ (Sections~\ref{sect-BTA},
\ref{sect-PGA}, and~\ref{sect-PGLD}).
Next, we extend \BTA\ with the mechanisms for interaction between
threads and services used, introduce a state-based approach to describe
services, and give a state-based description of services for molecular
dynamics (Sections~\ref{sect-TSI}, \ref{sect-service-descr},
and~\ref{sect-MDS}).
Following this, we give \PGA\ programs for creating representations of
\PGLD\ programs by molecules and a \PGA\ program for interpreting those
representations (Sections~\ref{sect-representation}
and~\ref{sect-interpretation}).
After that, we introduce the variant of \PGLD\ with indirect jump
instructions and adapt the above-mentioned \PGA\ programs to the
presence of indirect jump instructions (Sections~\ref{sect-PGLDij}
and~\ref{sect-interpretation-ij}).
Finally, we make some concluding remarks (Section~\ref{sect-concl}).

\section{Basic Thread Algebra}
\label{sect-BTA}

In this section, we review \BTA, a form of process algebra which is
tailored to the description of the behaviour of deterministic sequential
programs under execution.
The behaviours concerned are called \emph{threads}.

In \BTA, it is assumed that there is a fixed but arbitrary finite set of
\emph{basic actions} $\BAct$ with $\Tau \not\in \BAct$.
We write $\BActTau$ for $\BAct \union \set{\Tau}$.
The members of $\BActTau$ are referred to as \emph{actions}.

The intuition is that each basic action performed by a thread is taken
as a command to be processed by a service provided by the execution
environment of the thread.
The processing of a command may involve a change of state of the service
concerned.
At completion of the processing of the command, the service produces a
reply value.
This reply is either $\True$ or $\False$ and is returned to the thread
concerned.

Although \BTA\ is one-sorted, we make this sort explicit.
The reason for this is that we will extend \BTA\ with an additional sort
in Section~\ref{sect-TSI}.

The algebraic theory \BTA\ has one sort: the sort $\Thr$ of
\emph{threads}.
To build terms of sort $\Thr$, \BTA\ has the following constants and
operators:
\begin{iteml}
\item
the \emph{deadlock} constant $\const{\DeadEnd}{\Thr}$;
\item
the \emph{termination} constant $\const{\Stop}{\Thr}$;
\item
for each $a \in \BActTau$, the binary \emph{postconditional composition}
operator $\funct{\pcc{\ph}{a}{\ph}}{\linebreak[2]\Thr \x \Thr}{\Thr}$.
\end{iteml}
Terms of sort $\Thr$ are built as usual (see e.g.~\cite{ST99a,Wir90a}).
Throughout the paper, we assume that there are infinitely many variables
of sort $\Thr$, including $x,y,z$.

We use infix notation for postconditional composition.
We introduce \emph{action prefixing} as an abbreviation: $a \bapf p$,
where $p$ is a term of sort $\Thr$, abbreviates $\pcc{p}{a}{p}$.

Let $p$ and $q$ be closed terms of sort $\Thr$ and $a \in \BActTau$.
Then $\pcc{p}{a}{q}$ will perform action $a$, and after that proceed as
$p$ if the processing of $a$ leads to the reply $\True$ (called a
positive reply), and proceed as $q$ if the processing of $a$ leads to
the reply $\False$ (called a negative reply).
The action $\Tau$ plays a special role.
It is a concrete internal action: performing $\Tau$ will never lead to a
state change and always lead to a positive reply, but notwithstanding
all that its presence matters.

\BTA\ has only one axiom.
This axiom is given in Table~\ref{axioms-BTA}.%
\begin{table}[!tb]
\caption{Axiom of \BTA}
\label{axioms-BTA}
\begin{eqntbl}
\begin{axcol}
\pcc{x}{\Tau}{y} = \pcc{x}{\Tau}{x}                      & \axiom{T1}
\end{axcol}
\end{eqntbl}
\end{table}
Using the abbreviation introduced above, axiom T1 can be written as
follows: $\pcc{x}{\Tau}{y} = \Tau \bapf x$.

Each closed \BTA\ term of sort $\Thr$ denotes a finite thread, i.e.\ a
thread of which the length of the sequences of actions that it can
perform is bounded.
Guarded recursive specifications give rise to infinite threads.

A \emph{guarded recursive specification} over \BTA\ is a set of
recursion equations $E = \set{X = t_X \where X \in V}$, where $V$ is a
set of variables of sort $\Thr$ and each $t_X$ is a term of the form
$\DeadEnd$, $\Stop$ or $\pcc{t}{a}{t'}$ with $t$ and $t'$ \BTA\ terms of
sort $\Thr$ that contain only variables from $V$.
We write $\vars(E)$ for the set of all variables that occur on the
left-hand side of an equation in $E$.
We are only interested in models of \BTA\ in which guarded recursive
specifications have unique solutions, such as the projective limit model
of \BTA\ presented in~\cite{BB03a}.
A thread that is the solution of a finite guarded recursive
specification over \BTA\ is called a \emph{finite-state} thread.

We extend \BTA\ with guarded recursion by adding constants for solutions
of guarded recursive specifications and axioms concerning these
additional constants.
For each guarded recursive specification $E$ and each $X \in \vars(E)$,
we add a constant of sort $\Thr$ standing for the unique solution of $E$
for $X$ to the constants of \BTA.
The constant standing for the unique solution of $E$ for $X$ is denoted
by $\rec{X}{E}$.
Moreover, we add the axioms for guarded recursion given in
Table~\ref{axioms-REC} to \BTA,%
\begin{table}[!tb]
\caption{Axioms for guarded recursion}
\label{axioms-REC}
\begin{eqntbl}
\begin{saxcol}
\rec{X}{E} = \rec{t_X}{E} & \mif X \!=\! t_X \in E       & \axiom{RDP}
\\
E \Implies X = \rec{X}{E} & \mif X \in \vars(E)          & \axiom{RSP}
\end{saxcol}
\end{eqntbl}
\end{table}
where we write $\rec{t_X}{E}$ for $t_X$ with, for all $Y \in \vars(E)$,
all occurrences of $Y$ in $t_X$ replaced by $\rec{Y}{E}$.
In this table, $X$, $t_X$ and $E$ stand for an arbitrary variable of
sort $\Thr$, an arbitrary \BTA\ term of sort $\Thr$ and an arbitrary
guarded recursive specification over \BTA, respectively.
Side conditions are added to restrict the variables, terms and guarded
recursive specifications for which $X$, $t_X$ and $E$ stand.
The equations $\rec{X}{E} = \rec{t_X}{E}$ for a fixed $E$ express that
the constants $\rec{X}{E}$ make up a solution of $E$.
The conditional equations $E \Implies X = \rec{X}{E}$ express that this
solution is the only one.

We will use the following abbreviation: $a^\omega$, where
$a \in \BActTau$, abbreviates $\rec{X}{\set{X = a \bapf X}}$.

We will write \BTA+\REC\ for \BTA\ extended with the constants for
solutions of guarded recursive specifications and axioms RDP and RSP.

In~\cite{BM05c}, we show that the threads considered in \BTA+\REC\ can
be viewed as processes that are definable over ACP~\cite{Fok00}.

Closed terms of sort $\Thr$ from the language of \BTA+\REC\ that denote
the same infinite thread cannot always be proved equal by means of the
axioms of \BTA+\REC.
We introduce the approximation induction principle to remedy this.
The approximation induction principle, \AIP\ in short, is based on the
view that two threads are identical if their approximations up to any
finite depth are identical.
The approximation up to depth $n$ of a thread is obtained by cutting it
off after performing a sequence of actions of length $n$.

\AIP\ is the infinitary conditional equation given in
Table~\ref{axioms-AIP}.%
\begin{table}[!tb]
\caption{Approximation induction principle}
\label{axioms-AIP}
\begin{eqntbl}
\begin{axcol}
\AND{n \geq 0} \proj{n}{x} = \proj{n}{y} \Implies x = y   & \axiom{AIP}
\end{axcol}
\end{eqntbl}
\end{table}
Here, following~\cite{BL02a}, approximation of depth $n$ is phrased in
terms of a unary \emph{projection} operator
$\funct{\pi_n}{\Thr}{\Thr}$.
The axioms for the projection operators are given in
Table~\ref{axioms-proj}.%
\begin{table}[!tb]
\caption{Axioms for projection operators}
\label{axioms-proj}
\begin{eqntbl}
\begin{axcol}
\proj{0}{x} = \DeadEnd                                   & \axiom{P0} \\
\proj{n+1}{\Stop} = \Stop                                & \axiom{P1} \\
\proj{n+1}{\DeadEnd} = \DeadEnd                          & \axiom{P2} \\
\proj{n+1}{\pcc{x}{a}{y}} =
                       \pcc{\proj{n}{x}}{a}{\proj{n}{y}} & \axiom{P3}
\end{axcol}
\end{eqntbl}
\end{table}
In this table, $a$ stands for an arbitrary member of $\BActTau$.

We will write \BTA+\REC+\AIP\ for \BTA+\REC\ extended with the
projection operators and the axioms from Tables~\ref{axioms-AIP}
and~\ref{axioms-proj}.

\section{Program Algebra}
\label{sect-PGA}

In this section, we review \PGA, an algebra of sequential programs based
on the idea that sequential programs are in essence sequences of
instructions.
\PGA\ provides a program notation for finite-state threads.

In \PGA, it is assumed that there is a fixed but arbitrary finite set
$\BInstr$ of \emph{basic instructions}.
\PGA\ has the following \emph{primitive instructions}:
\begin{iteml}
\item
for each $a \in \BInstr$, a \emph{plain basic instruction} $a$;
\item
for each $a \in \BInstr$, a \emph{positive test instruction} $\ptst{a}$;
\item
for each $a \in \BInstr$, a \emph{negative test instruction} $\ntst{a}$;
\item
for each $l \in \Nat$, a \emph{forward jump instruction} $\fjmp{l}$;
\item
a \emph{termination instruction} $\halt$.
\end{iteml}
We write $\PInstr$ for the set of all primitive instructions.

The intuition is that the execution of a basic instruction $a$ may
modify a state and produces $\True$ or $\False$ at its completion.
In the case of a positive test instruction $\ptst{a}$, basic instruction
$a$ is executed and execution proceeds with the next primitive
instruction if $\True$ is produced and otherwise the next primitive
instruction is skipped and execution proceeds with the primitive
instruction following the skipped one.
In the case where $\True$ is produced and there is not at least one
subsequent primitive instruction and in the case where $\False$ is
produced and there are not at least two subsequent primitive
instructions, deadlock occurs.
In the case of a negative test instruction $\ntst{a}$, the role of the
value produced is reversed.
In the case of a plain basic instruction $a$, the value produced is
disregarded: execution always proceeds as if $\True$ is produced.
The effect of a forward jump instruction $\fjmp{l}$ is that execution
proceeds with the $l$-th next instruction of the program concerned.
If $l$ equals $0$ or the $l$-th next instruction does not exist, then
$\fjmp{l}$ results in deadlock.
The effect of the termination instruction $\halt$ is that execution
terminates.

\PGA\ has the following constants and operators:
\begin{iteml}
\item
for each $u \in \PInstr$, an \emph{instruction} constant $u$\,;
\item
the binary \emph{concatenation} operator $\ph \conc \ph$\,;
\item
the unary \emph{repetition} operator $\ph\rep$\,.
\end{iteml}
Terms are built as usual.
Throughout the paper, we assume that there are infinitely many
variables, including $x,y,z$.

We use infix notation for concatenation and postfix notation for
repetition.

Closed \PGA\ terms are considered to denote programs.
The intuition is that a program is in essence a non-empty, finite or
periodic infinite sequence of primitive instructions.%
\footnote
{A periodic infinite sequence is an infinite sequence with only finitely
 many subsequences.}
These sequences are called \emph{single pass instruction sequences}
because \PGA\ has been designed to enable single pass execution of
instruction sequences: each instruction can be dropped after it has been
executed.
Programs are considered to be equal if they represent the same single
pass instruction sequence.
The axioms for instruction sequence equivalence are given in
Table~\ref{axioms-PGA}.%
\begin{table}[!t]
\caption{Axioms of \PGA}
\label{axioms-PGA}
\begin{eqntbl}
\begin{axcol}
(x \conc y) \conc z = x \conc (y \conc z)              & \axiom{PGA1} \\
(x^n)\rep = x\rep                                      & \axiom{PGA2} \\
x\rep \conc y = x\rep                                  & \axiom{PGA3} \\
(x \conc y)\rep = x \conc (y \conc x)\rep              & \axiom{PGA4}
\end{axcol}
\end{eqntbl}
\end{table}
In this table, $n$ stands for an arbitrary natural number greater than
$0$.
For each $n > 0$, the term $x^n$ is defined by induction on $n$ as
follows: $x^1 = x$ and $x^{n+1} = x \conc x^n$.
The \emph{unfolding} equation $x\rep = x \conc x\rep$ is
derivable.
Each closed \PGA\ term is derivably equal to a term in
\emph{canonical form}, i.e.\ a term of the form $P$ or $P \conc Q\rep$,
where $P$ and $Q$ are closed \PGA\ terms that do not contain the
repetition operator.

Each closed \PGA\ term is considered to denote a program of which the
behaviour is a finite-state thread, taking the set $\BInstr$ of basic
instructions for the set $\BAct$ of actions.
The \emph{thread extraction} operator $\extr{\ph}$ assigns a thread to
each program.
The thread extraction operator is defined by the equations given in
Table~\ref{axioms-thread-extr} (for $a \in \BInstr$, $l \in \Nat$ and
$u \in \PInstr$)%
\begin{table}[!t]
\caption{Defining equations for thread extraction operator}
\label{axioms-thread-extr}
\begin{eqntbl}
\begin{eqncol}
\extr{a} = a \bapf \DeadEnd \\
\extr{a \conc x} = a \bapf \extr{x} \\
\extr{\ptst{a}} = a \bapf \DeadEnd \\
\extr{\ptst{a} \conc x} =
\pcc{\extr{x}}{a}{\extr{\fjmp{2} \conc x}} \\
\extr{\ntst{a}} = a \bapf \DeadEnd \\
\extr{\ntst{a} \conc x} =
\pcc{\extr{\fjmp{2} \conc x}}{a}{\extr{x}}
\end{eqncol}
\qquad
\begin{eqncol}
\extr{\fjmp{l}} = \DeadEnd \\
\extr{\fjmp{0} \conc x} = \DeadEnd \\
\extr{\fjmp{1} \conc x} = \extr{x} \\
\extr{\fjmp{l+2} \conc u} = \DeadEnd \\
\extr{\fjmp{l+2} \conc u \conc x} = \extr{\fjmp{l+1} \conc x} \\
\extr{\halt} = \Stop \\
\extr{\halt \conc x} = \Stop
\end{eqncol}
\end{eqntbl}
\end{table}
and the rule given in Table~\ref{rule-thread-extr}.%
\begin{table}[!t]
\caption{Rule for cyclic jump chains}
\label{rule-thread-extr}
\begin{eqntbl}
\begin{eqncol}
x \scongr \fjmp{0} \conc y \Implies \extr{x} = \DeadEnd
\end{eqncol}
\end{eqntbl}
\end{table}
This rule is expressed in terms of the \emph{structural congruence}
predicate $\ph \scongr \ph$, which is defined by the formulas given in
Table~\ref{axioms-scongr} (for $n,m,l \in \Nat$ and
$u_1,\ldots,u_n,v_1,\ldots,v_{m+1} \in \PInstr$).%
\begin{table}[!t]
\caption{Defining formulas for structural congruence predicate}
\label{axioms-scongr}
\begin{eqntbl}
\begin{eqncol}
\fjmp{n+1} \conc u_1 \conc \ldots \conc u_n \conc \fjmp{0}
\scongr
\fjmp{0} \conc u_1 \conc \ldots \conc u_n \conc \fjmp{0}
\\
\fjmp{n+1} \conc u_1 \conc \ldots \conc u_n \conc \fjmp{m}
\scongr
\fjmp{m+n+1} \conc u_1 \conc \ldots \conc u_n \conc \fjmp{m}
\\
(\fjmp{n+l+1} \conc u_1 \conc \ldots \conc u_n)\rep \scongr
(\fjmp{l} \conc u_1 \conc \ldots \conc u_n)\rep
\\
\fjmp{m+n+l+2} \conc u_1 \conc \ldots \conc u_n \conc
(v_1 \conc \ldots \conc v_{m+1})\rep \scongr {} \\ \hfill
\fjmp{n+l+1} \conc u_1 \conc \ldots \conc u_n \conc
(v_1 \conc \ldots \conc v_{m+1})\rep
\\
x \scongr x
\\
x_1 \scongr y_1 \And x_2 \scongr y_2 \Implies
x_1 \conc x_2 \scongr y_1 \conc y_2 \And
{x_1}\rep \scongr {y_1}\rep
\end{eqncol}
\end{eqntbl}
\end{table}

The equations given in Table~\ref{axioms-thread-extr} do not cover the
case where there is a cyclic chain of forward jumps.
Programs are structural congruent if they are the same after removing
all chains of forward jumps in favour of single jumps.
Because a cyclic chain of forward jumps corresponds to $\fjmp{0}$,
the rule from Table~\ref{rule-thread-extr} can be read as follows:
if $x$ starts with a cyclic chain of forward jumps, then $\extr{x}$
equals $\DeadEnd$.
It is easy to see that the thread extraction operator assigns the same
thread to structurally congruent programs.
Therefore, the rule from Table~\ref{rule-thread-extr} can be replaced by
the following generalization:
$x \scongr y  \Implies \extr{x} = \extr{y}$.

Let $E$ be a finite guarded recursive specification over \BTA, and let
$P_X$ be a closed \PGA\ term for each $X \in \vars(E)$.
Let $E'$ be the set of equations that results from replacing in $E$ all
occurrences of $X$ by $\extr{P_X}$ for each $X \in \vars(E)$.
If $E'$ can be obtained by applications of axioms PGA1--PGA4, the
defining equations for the thread extraction operator and the rule for
cyclic jump chains, then $\extr{P_X}$ is the solution of $E$ for $X$.
Such a finite guarded recursive specification can always be found.
%
%
Thus, the behaviour of each closed \PGA\ term, is a thread that is
definable by a finite guarded recursive specification over \BTA.
Moreover, each finite guarded recursive specification over \BTA\ can be
translated to a closed \PGA\ term of which the behaviour is the solution
of the finite guarded recursive specification concerned
(cf.\ Section~4 of~\cite{PZ06a}).

Closed \PGA\ terms are loosely called \PGA\ \emph{programs}.
\PGA\ programs in which the repetition operator do not occur
are called \emph{finite} \PGA\ programs.
%

\section{The Program Notation \PGLD}
\label{sect-PGLD}

In this section, we review a program notation which is rooted in \PGA.
This program notation, called \PGLD, belongs to a hierarchy of program
notations introduced in~\cite{BL02a}.
\PGLD\ is close to existing assembly languages.
It has absolute jump instructions and no explicit termination
instruction.

In \PGLD, like in \PGA, it is assumed that there is a fixed but
arbitrary finite set of \emph{basic instructions} $\BInstr$.
Again, the intuition is that the execution of a basic instruction $a$
may modify a state and produces $\True$ or $\False$ at its completion.

\PGLD\ has the following primitive instructions:
\begin{itemize}
\item
for each $a \in \BInstr$, a \emph{plain basic instruction} $a$;
\item
for each $a \in \BInstr$, a \emph{positive test instruction} $\ptst{a}$;
\item
for each $a \in \BInstr$, a \emph{negative test instruction} $\ntst{a}$;
\item
for each $l \in \Nat$, a \emph{direct absolute jump instruction}
$\ajmp{l}$.
\end{itemize}
\PGLD\ programs have the form $u_1;\ldots;u_k$, where $u_1,\ldots,u_k$
are primitive instructions of \PGLD.

The plain basic instructions, the positive test instructions, and the
negative test instructions are as in \PGA.
The effect of a direct absolute jump instruction $\ajmp{l}$ is that
execution proceeds with the $l$-th instruction of the program concerned.
If $\ajmp{l}$ is itself the $l$-th instruction, then deadlock occurs.
If $l$ equals $0$ or $l$ is greater than the length of the program, then
termination occurs.

We define the meaning of \PGLD\ programs by means of a function
$\pgldpga$ from the set of all \PGLD\ programs to the set of all \PGA\
programs.
This function is defined by
\begin{ldispl}
\pgldpga(u_1 \conc \ldots \conc u_k) =
(\psi_1(u_1) \conc \ldots \conc \psi_k(u_k) \conc
 \halt \conc \halt)\rep\;,
\end{ldispl}
where the auxiliary functions $\psi_j$ from the set of all primitive
instructions of \PGLD\ to the set of all primitive instructions of \PGA\
are defined as follows ($1 \leq j \leq k$):
\begin{ldispl}
\begin{aceqns}
\psi_j(\ajmp{l}) & = & \fjmp{l-j}       & \mif j \leq l \leq k\;, \\
\psi_j(\ajmp{l}) & = & \fjmp{k+2-(j-l)} & \mif 0   <  l   <  j\;, \\
\psi_j(\ajmp{l}) & = & \halt            & \mif l = 0 \Or l > k\;, \\
\psi_j(u)        & = & u
                    & \mif u\; \mathrm{is\;not\;a\;jump\;instruction}\;.
\end{aceqns}
\end{ldispl}

\sloppy
Let $P$ be a \PGLD\ program.
Then $\pgldpga(P)$ represents the meaning of $P$ as a \PGA\ program.
The intended behaviour of $P$ is the behaviour of $\pgldpga(P)$.
That is, the \emph{behaviour} of $P$, written $\extr{P}_\sPGLD$, is
$\extr{\pgldpga(P)}$.

We use the phrase \emph{projection semantics} to refer to the approach
to semantics followed in this section.
The meaning function $\pgldpga$ is called a \emph{projection}.

\section{Threads Interacting with Action Transforming Services}
\label{sect-TSI}

A thread may perform certain actions for having itself affected by some
service or affecting some service.
When processing an action performed by a thread, an action transforming
service affects that thread in one of the following ways:
(i)~by returning a reply value to the thread at completion of the
processing of the action performed by the thread;
(ii)~by turning the processed action into another action.
In this section, we introduce a mechanism that allows for services to
affect threads in either way.
The mechanism is a generalization of the use mechanism introduced
in~\cite{BM04c}.%
\footnote
{In later papers, the original use mechanism is also called
 thread-service composition.}
We also introduce a complementary mechanism of that generalized use
mechanism and an abstraction mechanism.
The difference between the generalized use mechanism and the
complementary mechanism is a matter of perspective:
the former is concerned with the effect of services on threads and
therefore produces threads, whereas the latter is concerned with the
effect of threads on services and therefore produces services.
The abstraction mechanism serves for concealment of the presence of
internal actions, which arise primarily from the generalized use
mechanism.

We will use the generalized use mechanism and the complementary
mechanism of that mechanism for interactions between threads that are
behaviours exhibited by sequential programs on execution and services
that process actions that relate to molecular dynamics.

It is assumed that there is a fixed but arbitrary finite set of
\emph{foci} $\Foci$ and a fixed but arbitrary finite set of
\emph{methods} $\Meth$.
Each focus plays the role of a name of a service provided by the
execution environment that can be requested to process a command.
Each method plays the role of a command proper.
For the set $\BAct$ of actions, we take the set
$\set{f.m \where f \in \Foci, m \in \Meth}$.
Performing an action $f.m$ is taken as making a request to the
service named $f$ to process command $m$.

We introduce yet another sort: the sort $\Serv$ of \emph{services}.
$\Serv$ is considered to stand for the set of all services.
We identify services with pairs $\tup{H_1,H_2}$, where
$\funct{H_1}{\neseqof{\Meth}}{\set{\True,\False,\Mless,\Blocked}}$ and
$\funct{H_2}{\neseqof{\Meth}}{\BActTau}$,
satisfying the following conditions:
\begin{ldispl}
\Forall{m \in \Meth}{{}}
\\ \quad
 {(\Exists{\alpha \in \seqof{\Meth}}
    {H_1(\alpha \concat \seq{m}) = \Mless} \Implies
   \Forall{\alpha' \in \seqof{\Meth}}
    {H_1(\alpha' \concat \seq{m}) \not\in \set{\True,\False}})}\;,
\eqnsep
\Forall{\alpha \in \neseqof{\Meth},m \in \Meth}
 {(H_1(\alpha) = \Blocked \Implies H_1(\alpha \concat \seq{m}) =
   \Blocked)}\;,
\eqnsep
\Forall{\alpha \in \neseqof{\Meth}}
 {(H_1(\alpha) \neq \Mless \Iff H_2(\alpha) = \Tau)}\;.
\end{ldispl}

Let $H$ be a service, and let $H_1$ and $H_2$ be the unique functions
such that $H = \tup{H_1,H_2}$.
Then we write $\rfunc{H}$ and $\afunc{H}$ for $H_1$ and $H_2$,
respectively.
Given a service $H$ and a method $m \in \Meth$,
the \emph{derived service} of $H$ after processing $m$,
written $\derive{m}H$, is defined by
$\rfunc{\derive{m}H}(\alpha) = \rfunc{H}(\seq{m} \concat \alpha)$ and
$\afunc{\derive{m}H}(\alpha) = \afunc{H}(\seq{m} \concat \alpha)$.

A service $H$ can be understood as follows:
\begin{iteml}
\item
if $\rfunc{H}(\seq{m}) = \True$, then the request to process $m$ is
accepted by the service, a positive reply is produced, $m$ is turned into
$\Tau$, and the service proceeds as $\derive{m}H$;
\item
if $\rfunc{H}(\seq{m}) = \False$, then the request to process $m$ is
accepted by the service, a negative reply is produced, $m$ is turned into
$\Tau$, and the service proceeds as $\derive{m}H$;
\item
if $\rfunc{H}(\seq{m}) = \Mless$, then the request to process $m$ is
accepted by the service, no reply is produced, $m$ is turned into
$\afunc{H}(\seq{m})$, and the service proceeds as $\derive{m}H$;
\item
if $\rfunc{H}(\seq{m}) = \Blocked$, then the request to process $m$ is
rejected by the service.
\end{iteml}
The three conditions imposed on services can be paraphrased as follows:
\begin{iteml}
\item
if it is possible that no reply is produced at completion of the
processing of a command, then it is impossible that a positive or
negative reply is produced at completion of the processing of that
command;
\item
after a request to process a command has been rejected, any request to
process a command will be rejected;
\item
a reply is produced at completion of the processing of a command if and
only if the command is turned into $\Tau$.
\end{iteml}

We introduce the following additional constant and operators:
\begin{iteml}
\item
the \emph{divergent service} constant $\const{\DivServ}{\Serv}$;
\item
for each $f \in \Foci$, the binary \emph{use} operator
$\funct{\use{\ph}{f}{\ph}}{\Thr \x \Serv}{\Thr}$;
\item
for each $f \in \Foci$, the binary \emph{apply} operator
$\funct{\apply{\ph}{f}{\ph}}{\Thr \x \Serv}{\Serv}$.
\end{iteml}
We use infix notation for use and apply.

$\DivServ$ is the unique service $H$ with the property that
$\rfunc{H}(\alpha) = \Blocked$ for all $\alpha \in \neseqof{\Meth}$.
The operators $\use{\ph}{f}{\ph}$ and $\apply{\ph}{f}{\ph}$ are
complementary.
Intuitively, $\use{p}{f}{H}$ is the thread that results from processing
all actions performed by thread $p$ that are of the form $f.m$ by the
service $H$.
When an action of the form $f.m$ performed by thread $p$ is processed by
the service $H$, that action is turned into another action and, if this
action is $\Tau$, postconditional composition is removed in favour of
action prefixing on the basis of the reply value produced.
Intuitively, $\apply{p}{f}{H}$ is the service that results from
processing all basic actions performed by thread $P$ that are of the
form $f.m$ by service $H$.
When an action of the form $f.m$ performed by thread $p$ is processed by
service $H$, that service is changed into $\derive{m}H$.

The axioms for the use and apply operators are given in
Tables~\ref{axioms-use} and~\ref{axioms-apply}.%
\begin{table}[!t]
\caption{Axioms for use operators}
\label{axioms-use}
\begin{eqntbl}
\begin{saxcol}
\use{\Stop}{f}{H} = \Stop                            & & \axiom{TSU1} \\
\use{\DeadEnd}{f}{H} = \DeadEnd                      & & \axiom{TSU2} \\
\use{\Tau \bapf x}{f}{H} =
                          \Tau \bapf (\use{x}{f}{H}) & & \axiom{TSU3} \\
\use{(\pcc{x}{g.m}{y})}{f}{H} =
\pcc{(\use{x}{f}{H})}{g.m}{(\use{y}{f}{H})}
 & \mif f \neq g                                       & \axiom{TSU4} \\
\use{(\pcc{x}{f.m}{y})}{f}{H} =
\Tau \bapf (\use{x}{f}{\derive{m}H})
                  & \mif \rfunc{H}(\seq{m}) = \True    & \axiom{TSU5} \\
\use{(\pcc{x}{f.m}{y})}{f}{H} =
\Tau \bapf (\use{y}{f}{\derive{m}H})
                  & \mif \rfunc{H}(\seq{m}) = \False   & \axiom{TSU6} \\
\use{(\pcc{x}{f.m}{y})}{f}{H} =
\use{(\pcc{x}{\afunc{H}(\seq{m})}{y})}{f}{\derive{m}H}
                  & \mif \rfunc{H}(\seq{m}) = \Mless   & \axiom{TSU7} \\
\use{(\pcc{x}{f.m}{y})}{f}{H} = \DeadEnd
                  & \mif \rfunc{H}(\seq{m}) = \Blocked & \axiom{TSU8} \\
\use{(\pcc{x}{f.m}{y})}{f}{\DivServ} = \DeadEnd      & & \axiom{TSU9}
\end{saxcol}
\end{eqntbl}
\end{table}
\begin{table}[!t]
\caption{Axioms for apply operators}
\label{axioms-apply}
\begin{eqntbl}
\begin{saxcol}
\apply{\Stop}{f}{H} = H                              & & \axiom{TSA1} \\
\apply{\DeadEnd}{f}{H} = \DivServ                    & & \axiom{TSA2} \\
\apply{(\Tau \bapf x)}{f}{H} = \apply{x}{f}{H}       & & \axiom{TSA3} \\
\apply{(\pcc{x}{g.m}{y})}{f}{H} = \DivServ
                                       & \mif f \neq g & \axiom{TSA4} \\
\apply{(\pcc{x}{f.m}{y})}{f}{H} = \apply{x}{f}{\derive{m}H}
                  & \mif \rfunc{H}(\seq{m}) = \True    & \axiom{TSA5} \\
\apply{(\pcc{x}{f.m}{y})}{f}{H} = \apply{y}{f}{\derive{m}H}
                  & \mif \rfunc{H}(\seq{m}) = \False   & \axiom{TSA6} \\
\apply{(\pcc{x}{f.m}{y})}{f}{H} =
\apply{(\pcc{x}{\afunc{H}(\seq{m})}{y})}{f}{\derive{m}H}
                  & \mif \rfunc{H}(\seq{m}) = \Mless   & \axiom{TSA7} \\
\apply{(\pcc{x}{f.m}{y})}{f}{H} = \DivServ
                  & \mif \rfunc{H}(\seq{m}) = \Blocked & \axiom{TSA8} \\
\apply{(\pcc{x}{f.m}{y})}{f}{\DivServ} = \DivServ    & & \axiom{TSA9} \\
(\AND{n \geq 0} \apply{\proj{n}{x}}{f}{H} = \DivServ) \Implies
\apply{x}{f}{H} = \DivServ                           & & \axiom{TSA10}
\end{saxcol}
\end{eqntbl}
\end{table}
In these tables, $f$ and $g$ stand for an arbitrary foci from $\Foci$,
$m$ stands for an arbitrary method from $\Meth$, and $H$ is a variable
of sort $\Serv$.
Axioms TSU3 and TSU4 express that the action $\Tau$ and actions of
the form $g.m$, where $f \neq g$, are not processed.
Axioms TSU5--TSU7 express that a thread is affected by a service
as described above when an action of the form $f.m$ performed by the
thread is processed by the service.
Axiom TSU8 expresses that deadlock takes place when an action to be
processed is not accepted.
Axiom TSU9 expresses that the divergent service does not accept any
action.
Axiom TSA3 expresses that a service is not affected by a thread when the
action $\Tau$ is performed by the thread and axiom TSA4 expresses that a
service is turned into the divergent service when an action of the form
$g.m$, where $f \neq g$,  is performed by the thread.
Axioms TSA5--TSA7 express that a service is affected by a thread as
described above when an action of the form $f.m$ performed by the thread
is processed by the service.
Axiom TSA8 expresses that a service is turned into the divergent service
when an action performed by the thread is not accepted.
Axiom TSA9 expresses that the divergent service is not affected by a
thread when an action of the form $f.m$ is performed by the thread.

We say that $\apply{p}{f}{H}$ is a \emph{divergent thread application}
if $\apply{\proj{n}{p}}{f}{H} = \DivServ$ for all $n \in \Nat$.
Axiom TSA10 can be read as follows: if $\apply{p}{f}{H}$ is a divergent
thread application, then it equals $\DivServ$.

The use operators introduced in~\cite{BM04c} deals in essence with
services $H$ where $\afunc{H}(\alpha) = \Tau$ for all
$\alpha \in \neseqof{\Meth}$.
For these services, the use operators introduced here coincide with
those use operators.

Let $T$ stand for either \BTA, \BTA+\REC\ or \BTA+\REC+\AIP.
Then we will write $T+\TSI$ for $T$, taking the set
$\set{f.m \where f \in \Foci, m \in \Meth}$ for $\BAct$, extended with
the divergent service constant, the use and apply operators, and the
axioms from Tables~\ref{axioms-use} and~\ref{axioms-apply}, with the
exception of axiom TSA10 in the case where $T$ does not stand for
\BTA+\REC+\AIP.

The action $\Tau$ is an internal action whose presence matters.
To conceal its presence in the case where it does not matter after all,
we also introduce the unary \emph{abstraction} operator
$\funct{\abstr}{\Thr}{\Thr}$.

The axioms for the abstraction operator are given in
Table~\ref{axioms-abstr}.%
\begin{table}[!t]
\caption{Axiom for abstraction}
\label{axioms-abstr}
\begin{eqntbl}
\begin{axcol}
\abstr(\Stop) = \Stop                                    & \axiom{TT1} \\
\abstr(\DeadEnd) = \DeadEnd                              & \axiom{TT2} \\
\abstr(\pcc{x}{\Tau}{y}) = \abstr(x)                     & \axiom{TT3} \\
\abstr(\pcc{x}{a}{y}) = \pcc{\abstr(x)}{a}{\abstr(y)}    & \axiom{TT4}
\end{axcol}
\end{eqntbl}
\end{table}
In this table, $a$ stands for an arbitrary basic action from $\BAct$.

Abstraction can for instance be appropriate in the case where $\Tau$
arises from turning actions of an auxiliary nature into $\Tau$ on
use of a service.
An example of this case will occur in Section~\ref{sect-interpretation}.
Unlike the use mechanisms introduced in~\cite{BM04c} and in the current
paper, the use mechanism introduced in~\cite{BP02a} incorporates
abstraction.

Let $T$ stand for either \BTA, \BTA+\REC, \BTA+\REC+\AIP, \BTA+\TSI,
BTA+\REC+\TSI\ or \BTA+\REC+\AIP+\TSI.
Then we will write $T$+\ABSTR\ for $T$ extended with the abstraction
operator and the axioms from Table~\ref{axioms-abstr}.

The equation $\abstr(\Tau^\omega) = \DeadEnd$ is derivable from the
axioms of \BTA+\REC+\linebreak[2]\AIP+\ABSTR.

To simplify matters, from now on the set
$\set{f.m \where f \in \Foci, m \in \Meth}$
is taken as the set $\BInstr$ of basic instructions when \PGA\ or \PGLD\
is concerned.
Thereby no real restriction is imposed on the set $\BInstr$: in the case
where the cardinality of $\Foci$ equals~$1$, all basic instructions have
the same focus and the set $\Meth$ of methods can be looked upon as the
set $\BInstr$ of basic instructions.
We use the convention to omit foci from \PGA\ programs in which all
basic instructions have the same focus.

Strictly speaking, the propositions and theorems presented in this paper
are proved in the algebraic theory obtained by:
(i)~combining \PGA\ with \BTA+\REC+\AIP+\TSI+\ABSTR, resulting in a
theory with three sorts: a sort $\Prog$ of programs, a sort $\Thr$ of
threads, and a sort $\Serv$ of services;
(ii)~extending the result by taking $\extr{\ph}$ for an additional
operator from sort $\Prog$ to sort $\Thr$ and taking the semantic
equations and rule defining thread extraction for
additional\linebreak[2] axioms.

\section{State-Based Description of Services}
\label{sect-service-descr}

In this section, we introduce the state-based approach to describe
families of services that will be used in Section~\ref{sect-MDS}.
This approach is similar to the approach to describe state machines
introduced in~\cite{BP02a}.

In this approach, a family of services is described by
\begin{itemize}
\item
a set of states $S$;
\item
an effect function $\funct{\eff}{\Meth \x S}{S}$;
\item
a yield function
$\funct{\yld}{\Meth \x S}{\set{\True,\False,\Mless,\Blocked}}$;
\item
an action function
$\funct{\act}{\Meth \x S}{\BActTau}$;
\end{itemize}
satisfying the following conditions:
\pagebreak[2]
\begin{ldispl}
\Forall{m \in \Meth}
 {(\Exists{s \in S}{\yld(m,s) = \Mless} \Implies
   \Forall{s' \in S}{\yld(m,s') \not\in \set{\True,\False}})}\;,
\eqnsep
\Exists{s \in S}
 {\Forall{m \in \Meth}
   {{} \\ \quad
    (\yld(m,s) = \Blocked \And
     \Forall{s' \in S}
      {(\yld(m,s') = \Blocked \Implies \eff(m,s') = s)})}}\;,
\eqnsep
\Forall{m \in \Meth,s \in S}
 {(\yld(m,s) \neq \Mless \Iff \act(m,s) = \Tau)}\;.
\end{ldispl}
The set $S$ contains the states in which the services may be, and the
functions $\eff$, $\yld$ and $\act$ give, for each method $m$ and state
$s$, the state, reply and action, respectively, that result from
processing $m$ in state $s$.

We define, for each $s \in S$, a cumulative effect function
$\funct{\ceff_s}{\seqof{\Meth}}{S}$ in terms of $s$ and $\eff$ as follows:
\begin{ldispl}
\ceff_s(\emptyseq) = s\;,
\\
\ceff_s(\alpha \concat \seq{m}) = \eff(m,\ceff_s(\alpha))\;.
\end{ldispl}
We define, for each $s \in S$, a service $H_s$ in terms of $\ceff_s$,
$\yld$ and $\act$ as follows:
\begin{ldispl}
\rfunc{H_s}(\alpha \concat \seq{m})  = \yld(m,\ceff_s(\alpha))\;,
\\
\afunc{H_s}(\alpha \concat \seq{m}) = \act(m,\ceff_s(\alpha))\;.
\end{ldispl}
$H_s$ is called the service with \emph{initial state} $s$ described by
$S$, $\eff$, $\yld$ and $\act$.
We say that $\set{H_s \where s \in S}$ is the \emph{family of services}
described by $S$, $\eff$, $\yld$ and $\act$.

The conditions that are imposed on $S$, $\eff$, $\yld$ and $\act$ imply
that, for each $s \in S$, $H_s$ is a service indeed.
It is worth mentioning that $\derive{m} H_s = H_{\eff(m,s)}$,
$\rfunc{H_s}(\seq{m}) = \yld(m,s)$, and
$\afunc{H_s}(\seq{m}) = \act(m,s)$.

\section{Services for Molecular Dynamics}
\label{sect-MDS}

In this section, we describe a family of services which concerns
molecular dynamics.
The formal description given here elaborates on an informal description
of molecular dynamics given in~\cite{BB02a}.

The states of molecular dynamics services resemble collections of
molecules composed of atoms and the methods of molecular dynamics
services transform the structure of molecules like in chemical
reactions.
An atom can have \emph{fields} and each of those fields can contain an
atom.
An atom together with the ones it has links to via fields can be viewed
as a submolecule, and a submolecule that is not contained in a larger
submolecule can be viewed as a molecule.
Thus, the collection of molecules that make up a state can be viewed as
a fluid.
By means of methods, new atoms can be created, fields can be added to
and removed from atoms, and the contents of fields of atoms can be
examined and modified.
A few methods use a \emph{spot} to put an atom in or to get an atom
from.
By means of methods, the contents of spots can be compared and modified
as well.
Creating an atom is thought of as turning an element of a given set of
\emph{proto-atoms} into an atom.
If there are no proto-atoms left, then atoms can no longer be created.

It is assumed that a finite set $\Spot$ of \emph{spots} such that
$\Foci,\Meth \subseteq \Spot$, a total order $\leq$ on $\Spot$, and a
finite set $\Field$ of \emph{fields} have been given.
It is further assumed that a countable set $\PAtom$ of
\emph{proto-atoms} such that $\bot \not\in \PAtom$ and a bijection
$\funct{\proatom}{[1,\card(\PAtom)]}{\PAtom}$ have been given.
Although the set of proto-atoms may be infinite, there exists at any
time only a finite number of atoms.
Each of those atoms has only a finite number of fields.
Molecular dynamics services accept the following methods:
\begin{itemize}
\item
for each $s \in \Spot$, a \emph{create atom method} $\creatom{s}$;
\item
for each $s,s' \in \Spot$, a \emph{set spot method} $\setspot{s}{s'}$;
\item
for each $s, \in \Spot$, a \emph{clear spot method} $\clrspot{s}$;
\item
for each $s,s' \in \Spot$,
an \emph{equality test method} $\equaltst{s}{s'}$;
\item
for each $s \in \Spot$,
an \emph{undefinedness test method} $\undeftst{s}$;
\item
for each $s \in \Spot$ and $v \in \Field$,
a \emph{add field method} $\addfield{s}{v}$;
\item
for each $s \in \Spot$ and $v \in \Field$,
a \emph{remove field method} $\rmvfield{s}{v}$;
\item
for each $s \in \Spot$ and $v \in \Field$,
a \emph{has field method} $\hasfield{s}{v}$;
\item
for each $s \in \Spot$ and $v \in \Field$,
a \emph{set field method} $\setfield{s}{v}{s'}$;
\item
for each $s \in \Spot$ and $v \in \Field$,
a \emph{get field method} $\getfield{s}{s'}{v}$;
\item
for each $s,s' \in \Spot$,
a \emph{generate action method} $\genact{s}{s'}$.
\end{itemize}
We write $\Meth_\ga$ for the set of all generate action methods and
$\Meth_\md$ for the set of all methods of molecular dynamics services.
It is assumed that $\Meth_\md \subseteq \Meth$.

The states of molecular dynamics services comprise the contents of all
spots, the fields of the existing atoms, and the contents of those
fields.
The methods accepted by molecular dynamics services can be explained as
follows:
\begin{itemize}
\item
$\creatom{s}$:
if an atom can be created, then the contents of spot $s$ becomes a newly
created atom and the reply is $\True$; otherwise, nothing changes and
the reply is $\False$;
\item
$\setspot{s}{s'}$:
the contents of spot $s$ becomes the same as the contents of spot $s'$
and the reply is $\True$;
\item
$\clrspot{s}$:
the contents of spot $s$ becomes undefined and the reply is $\True$;
\item
$\equaltst{s}{s'}$:
if the contents of spot $s$ equals the contents of spot $s'$, then
nothing changes and the reply is $\True$; otherwise, nothing changes and
the reply is $\False$;
\item
$\undeftst{s}$:
if the contents of spot $s$ is undefined, then nothing changes and the
reply is $\True$; otherwise, nothing changes and the reply is $\False$;
\item
$\addfield{s}{v}$:
if the contents of spot $s$ is an atom and $v$ is not yet a field of
that atom, then $v$ is added (with undefined contents) to the fields of
that atom and the reply is $\True$; otherwise, nothing changes and the
reply is $\False$;
\item
$\rmvfield{s}{v}$:
if the contents of spot $s$ is an atom and $v$ is a field of that atom,
then $v$ is removed from the fields of that atom and the reply is
$\True$; otherwise, nothing changes and the reply is $\False$;
\item
$\hasfield{s}{v}$:
if the contents of spot $s$ is an atom and $v$ is a field of that atom,
then nothing changes and the reply is $\True$; otherwise, nothing
changes and the reply is $\False$;
\item
$\setfield{s}{v}{s'}$:
if the contents of spot $s$ is an atom and $v$ is a field of that atom,
then the contents of that field becomes the same as the contents of spot
$s'$ and the reply is $\True$; otherwise, nothing changes and the reply
is $\False$;
\item
$\getfield{s}{s'}{v}$:
if the contents of spot $s'$ is an atom and $v$ is a field of that atom,
then the contents of spot $s$ becomes the same as the contents of that
field and the reply is $\True$; otherwise, nothing changes and the reply
is $\False$;
\item
$\genact{s}{s'}$:
if the contents of spots $s$ and $s'$ are atoms and there exist
$f' \in \Foci$ and  $m' \in \Meth$ such that the contents of spot $f'$
equals the contents of spot $s$ and the contents of spot $m'$ equals the
contents of spot $s'$, then nothing changes, there is no reply, and
$\genact{s}{s'}$ is turned into $f.m$ where $f$ and $m$ are the least
$f' \in \Foci$ and $m' \in \Meth$ with respect to $\leq$ that satisfy
the conditions formulated above; otherwise, $\genact{s}{s'}$ is
rejected.
\end{itemize}
In the explanation given above, wherever we say that the reply is
$\True$ or the reply is $\False$, this is meant to imply that the method
concerned is turned into $\Tau$.
Moreover, wherever we say that the contents of a spot or field becomes
the same as the contents of another spot or field, this is meant to
imply that the former contents becomes undefined if the latter contents
is undefined.

Let
\begin{ldispl}
\begin{aeqns}
\nm{SS}  & = & \Spot \to (\PAtom \union \set{\bot})\;,
\eqnsep
\nm{AS}  & = &
\Union{A \in \fsetof{(\PAtom)}}
 (A \to
  \Union{F \in \fsetof{(\Field)}} (F \to (\PAtom \union \set{\bot})))\;,
\eqnsep
\nm{MDS} & = &
\set{\tup{\sigma,\alpha} \in \nm{SS} \x \nm{AS} \where
     \rng(\sigma) \subseteq \dom(\alpha) \union \set{\bot} \And {}
\\ & & \hfill
         \Forall{a \in \dom(\alpha)}
          {\rng(\alpha(a)) \subseteq
           \dom(\alpha) \union \set{\bot}}}\;,
\eqnsep
s        & \in & \nm{MDS}\;.
\end{aeqns}
\end{ldispl}
Then we write $\MDS_s$ for the service with initial state $s$ described
by $S = \nm{MDS}\, \union \set{\undef}$, where
$\undef \not\in \nm{MDS}$, and the functions $\eff$, $\yld$ and $\act$
defined in Tables~\ref{eff-mds}, \ref{yld-mds} and~\ref{act-mds}.%
\footnote
{We use the following notation for functions:
 $\emptymap$ for the empty function;
 $\maplet{d}{r}$ for the function $f$ with $\dom(f) = \set{d}$ such that
 $f(d) = r$;
 $f \owr g$ for the function $h$ with $\dom(h) = \dom(f) \union \dom(g)$
 such that for all $d \in \dom(h)$,\, $h(d) = f(d)$ if
 $d \not\in \dom(g)$ and $h(d) = g(d)$ otherwise;
 and $f \dsub D$ for the function $g$ with $\dom(g) = \dom(f) \diff D$
 such that for all $d \in \dom(g)$,\, $g(d) = f(d)$.}%
\begin{table}[!t]
\caption{Effect function for molecular dynamics services}
\label{eff-mds}
\begin{eqntbl}
\begin{axcol}
\eff(\creatom{s},\tup{\sigma,\alpha}) = {} \\ \;\;
\tup{\sigma \owr \maplet{s}{\newatom(\dom(\alpha))},
     \alpha \owr \maplet{\newatom(\dom(\alpha))}{\emptymap}}
 & \mif \newatom(\dom(\alpha)) \neq \bot
\\
\eff(\creatom{s},\tup{\sigma,\alpha}) = \tup{\sigma,\alpha}
 & \mif \newatom(\dom(\alpha)) = \bot
\\
\eff(\setspot{s}{s'},\tup{\sigma,\alpha}) =
\tup{\sigma \owr \maplet{s}{\sigma(s')},\alpha}
\\
\eff(\clrspot{s},\tup{\sigma,\alpha}) =
\tup{\sigma \owr \maplet{s}{\bot},\alpha}
\\
\eff(\equaltst{s}{s'},\tup{\sigma,\alpha}) = \tup{\sigma,\alpha}
\\
\eff(\undeftst{s},\tup{\sigma,\alpha}) = \tup{\sigma,\alpha}
\\
\eff(\addfield{s}{v},\tup{\sigma,\alpha}) = {} \\ \;\;
\tup{\sigma,
     \alpha \owr
     \maplet{\sigma(s)}{\alpha(\sigma(s)) \owr \maplet{v}{\bot}}}
 & \mif \sigma(s) \neq \bot \And v \not\in \dom(\alpha(\sigma(s)))
\\
\eff(\addfield{s}{v},\tup{\sigma,\alpha}) = \tup{\sigma,\alpha}
 & \mif \sigma(s) = \bot \Or v \in \dom(\alpha(\sigma(s)))
\\
\eff(\rmvfield{s}{v},\tup{\sigma,\alpha}) =
\tup{\sigma,
     \alpha \owr
     \maplet{\sigma(s)}{\alpha(\sigma(s)) \dsub \set{v}}}
 & \mif \sigma(s) \neq \bot \And v \in \dom(\alpha(\sigma(s)))
\\
\eff(\rmvfield{s}{v},\tup{\sigma,\alpha}) = \tup{\sigma,\alpha}
 & \mif \sigma(s) = \bot \Or v \not\in \dom(\alpha(\sigma(s)))
\\
\eff(\hasfield{s}{v},\tup{\sigma,\alpha}) = \tup{\sigma,\alpha}
\\
\eff(\setfield{s}{v}{s'},\tup{\sigma,\alpha}) = {} \\ \;\;
\tup{\sigma,
     \alpha \owr
     \maplet
      {\sigma(s)}
      {\alpha(\sigma(s)) \owr \maplet{v}{\sigma(s')}}}
 & \mif \sigma(s) \neq \bot \And v \in \dom(\alpha(\sigma(s)))
\\
\eff(\setfield{s}{v}{s'},\tup{\sigma,\alpha}) = \tup{\sigma,\alpha}
 & \mif \sigma(s) = \bot \Or v \not\in \dom(\alpha(\sigma(s)))
\\
\eff(\getfield{s}{s'}{v},\tup{\sigma,\alpha}) =
\tup{\sigma \owr \maplet{s}{\alpha(\sigma(s'))(v)},\alpha}
 & \mif \sigma(s') \neq \bot \And v \in \dom(\alpha(\sigma(s')))
\\
\eff(\getfield{s}{s'}{v},\tup{\sigma,\alpha}) = \tup{\sigma,\alpha}
 & \mif \sigma(s') = \bot \Or v \not\in \dom(\alpha(\sigma(s')))
\\
\eff(\genact{s}{s'},\tup{\sigma,\alpha}) = \tup{\sigma,\alpha}
 & \mif \gacnd(\sigma,s,s') = \True
\\
\eff(\genact{s}{s'},\tup{\sigma,\alpha}) = \undef
 & \mif \gacnd(\sigma,s,s') = \False
\\
\eff(m,\tup{\sigma,\alpha}) = \undef
 & \mif m \not\in \Meth_\md
\\
\eff(m,\undef) = \undef
\end{axcol}
\end{eqntbl}
\end{table}
\begin{table}[!t]
\caption{Yield function for molecular dynamics services}
\label{yld-mds}
\begin{eqntbl}
\begin{axcol}
\yld(\creatom{s},\tup{\sigma,\alpha}) = \True
 & \mif \newatom(\dom(\alpha)) \neq \bot
\\
\yld(\creatom{s},\tup{\sigma,\alpha}) = \False
 & \mif \newatom(\dom(\alpha)) = \bot
\\
\yld(\setspot{s}{s'},\tup{\sigma,\alpha}) = \True
\\
\yld(\clrspot{s},\tup{\sigma,\alpha}) = \True
\\
\yld(\equaltst{s}{s'},\tup{\sigma,\alpha}) = \True
 & \mif \sigma(s) = \sigma(s')
\\
\yld(\equaltst{s}{s'},\tup{\sigma,\alpha}) = \False
 & \mif \sigma(s) \neq \sigma(s')
\\
\yld(\undeftst{s},\tup{\sigma,\alpha}) = \True
 & \mif \sigma(s) = \bot
\\
\yld(\undeftst{s},\tup{\sigma,\alpha}) = \False
 & \mif \sigma(s) \neq \bot
\\
\yld(\addfield{s}{v},\tup{\sigma,\alpha}) = \True
 & \mif \sigma(s) \neq \bot \And v \not\in \dom(\alpha(\sigma(s)))
\\
\yld(\addfield{s}{v},\tup{\sigma,\alpha}) = \False
 & \mif \sigma(s) = \bot \Or v \in \dom(\alpha(\sigma(s)))
\\
\yld(\rmvfield{s}{v},\tup{\sigma,\alpha}) = \True
 & \mif \sigma(s) \neq \bot \And v \in \dom(\alpha(\sigma(s)))
\\
\yld(\rmvfield{s}{v},\tup{\sigma,\alpha}) = \False
 & \mif \sigma(s) = \bot \Or v \not\in \dom(\alpha(\sigma(s)))
\\
\yld(\hasfield{s}{v},\tup{\sigma,\alpha}) = \True
 & \mif \sigma(s) \neq \bot \And v \in \dom(\alpha(\sigma(s)))
\\
\yld(\hasfield{s}{v},\tup{\sigma,\alpha}) = \False
 & \mif \sigma(s) = \bot \Or v \not\in \dom(\alpha(\sigma(s)))
\\
\yld(\setfield{s}{v}{s'},\tup{\sigma,\alpha}) = \True
 & \mif \sigma(s) \neq \bot \And v \in \dom(\alpha(\sigma(s)))
\\
\yld(\setfield{s}{v}{s'},\tup{\sigma,\alpha}) = \False
 & \mif \sigma(s) = \bot \Or v \not\in \dom(\alpha(\sigma(s)))
\\
\yld(\getfield{s}{s'}{v},\tup{\sigma,\alpha}) = \True
 & \mif \sigma(s') \neq \bot \And v \in \dom(\alpha(\sigma(s')))
\\
\yld(\getfield{s}{s'}{v},\tup{\sigma,\alpha}) = \False
 & \mif \sigma(s') = \bot \Or v \not\in \dom(\alpha(\sigma(s')))
\\
\yld(\genact{s}{s'},\tup{\sigma,\alpha}) = \Mless
 & \mif \gacnd(\sigma,s,s') = \True
\\
\yld(\genact{s}{s'},\tup{\sigma,\alpha}) = \Blocked
 & \mif \gacnd(\sigma,s,s') = \False
\\
\yld(m,\tup{\sigma,\alpha}) = \Blocked
 & \mif m \not\in \Meth_\md
\\
\yld(m,\undef) = \Blocked
\end{axcol}
\end{eqntbl}
\end{table}
\begin{table}[!t]
\caption{Action function for molecular dynamics services}
\label{act-mds}
\begin{eqntbl}
\begin{axcol}
\act(m,\tup{\sigma,\alpha}) = \Tau
 & \mif m \not\in \Meth_\ga
\\
\act(\genact{s}{s'},\tup{\sigma,\alpha}) = \gares(\sigma,s,s')
 & \mif \gacnd(\sigma,s,s') = \True
\\
\act(\genact{s}{s'},\tup{\sigma,\alpha}) = \Tau
 & \mif \gacnd(\sigma,s,s') = \False
\\
\act(m,\undef) = \Tau
\end{axcol}
\end{eqntbl}
\end{table}%
In these tables, the functions
$\funct{\newatom}{\fsetof{(\PAtom)}}{(\PAtom \union \set{\bot})}$,
$\funct{\gacnd}{\nm{SS} \x \Spot \x \Spot}{\set{\True,\False}}$, and
$\funct{\gares}{\nm{SS} \x \Spot \x \Spot}{\BAct}$ are used.
These functions are defined as follows:
\begin{ldispl}
\begin{aceqns}
\newatom(A) & = & \proatom(m + 1) & \mif m  <   \card(\PAtom)\;,
\\
\newatom(A) & = & \bot            & \mif m \geq \card(\PAtom)\;,
\end{aceqns}
\end{ldispl}
where $m = \max \set{n \where \proatom(n) \in A}$;
\begin{ldispl}
\gacnd(\sigma,s,s') = \True\; \mathsf{iff}
\\ \;\; \sigma(s) \neq \bot \And \sigma(s') \neq \bot \And
        \Exists{f \in \Foci}{\sigma(f) = \sigma(s)} \And
        \Exists{m \in \Meth}{\sigma(m) = \sigma(s')}\;;
\end{ldispl}
\begin{ldispl}
\gares(\sigma,s,s') = f.m\;,
\end{ldispl}
where $f = \min \set{f' \in \Foci \where \sigma(f') = \sigma(s)}$
and   $m = \min \set{m' \in \Meth \where \sigma(m') = \sigma(s)}$.
We write $\MDS$ for the family of services described by $S$, $\eff$,
$\yld$, and $\act$.
We write $\MDS_\init$ for the service from this family of which the
initial state is the unique $\tup{\sigma,\alpha} \in S$ such that
$\dom(\alpha) = \emptymap$.

Let $\tup{\sigma,\alpha} \in S$, let $s \in \Spot$, let
$a \in \dom(\alpha)$, and let $v \in \dom(\alpha(a))$.
Then $\sigma(s)$ is the contents of spot $s$ if $\sigma(s) \neq \bot$,
$v$ is a field of atom $a$, and $\alpha(a)(v)$ is the contents of field
$v$ of atom $a$ if $\alpha(a)(v) \neq \bot$.
The contents of spot $s$ is undefined if $\sigma(s) = \bot$, and the
contents of field $v$ of atom $a$ is undefined if $\alpha(a)(v) = \bot$.
Notice that $\dom(\alpha)$ is taken for the set of all existing atoms.
Therefore, the contents of each spot, i.e.\ each element of
$\rng(\sigma)$, must be in $\dom(\alpha)$ if the contents is defined.
Moreover, for each existing atom $a$, the contents of each of its
fields, i.e.\ each element of $\rng(\alpha(a))$, must be in
$\dom(\alpha)$ if the contents is defined.
Molecular dynamics services get into state $\undef$ when refusing a
request to process a command.
The molecular dynamics service with initial state $\undef$ is the same
as the divergent service $\DivServ$.
%
The function $\newatom$ turns proto-atoms into atoms.
After all proto-atoms have been turned into atoms, $\newatom$ yields
$\bot$.
This can only happen if the number of proto-atoms is finite.
The initial state of $\MDS_\init$ is the unique state in which no
proto-atoms have been turned into atoms yet.

\section{Representing Programs by Molecules}
\label{sect-representation}

In this section, we associate each \PGLD\ program with a \PGA\ program
for constructing a representation of the \PGLD\ program by a molecule.

Let $u_1 \conc \ldots \conc u_k$ be a \PGLD\ program in which the foci
$f_1,\ldots,f_n$ and the methods $m_1,\ldots,m_{n'}$ occur.
The idea is that:
\begin{iteml}
\item
an atom is created for each of the foci $f_1,\ldots,f_n$ and the methods
$m_1,\ldots,m_{n'}$, using the focus or method itself as the spot into
which the atom concerned is brought on its creation, and an atom is
created for each of the instructions $u_1,\ldots,u_k$;
\item
each atom that corresponds to an instruction is linked via fields to
other atoms as follows:
\begin{iteml}
\item
if the corresponding instruction is of the form $f.m$, $\ptst{f.m}$ or
$\ntst{f.m}$, then the atom is linked to the atoms that correspond to
the focus and method concerned and the atoms that correspond to the
instructions with which execution must proceed in the cases of a positive
and a negative reply;
\item
if the corresponding instruction is of the form $\ajmp{l}$ and
$1 \leq l \leq k$, then the atom is linked to the atom that corresponds
to the instruction with which execution must proceed, i.e.\ the $l$-th
instruction;
\end{iteml}
\item
to prevent that an exception must be made of the instruction $u_k$ in
the case where it is of the form $f.m$, $\ptst{f.m}$ or $\ntst{f.m}$,
two additional atoms are created that are not linked to other atoms;
\item
the atom that corresponds to the first instruction is brought into spot
$s$.
\end{iteml}
Notice that, if an atom corresponds to an instruction of the form
$\ajmp{l}$ and not $1 \leq l \leq k$, then the atom is not linked to
another atom.

We make the assumptions that $\Foci$ and $\Meth$ are disjoint and that
$\PAtom$ is infinite.
Under these assumptions, atom creation always leads to a positive reply
and no test instructions are needed in the programs for constructing
representations of \PGLD\ programs.
The first assumption is only made because it permits this
simplification.
The second assumption is primarily made because there will be \PGLD\
programs for which there are no \PGA\ programs for constructing their
representation if $\PAtom$ is finite.

We define the \PGA\ programs for constructing representations of \PGLD\
programs by means of a function
$\pgldmd$ from the set of all \PGLD\ programs to the set of all \PGA\
programs.
This function is defined by
\begin{ldispl}
\begin{aeqns}
\pgldmd(u_1 \conc \ldots \conc u_k) & = &
\creatom{f_1} \conc \ldots \conc \creatom{f_n} \conc
\creatom{m_1} \conc \ldots \conc \creatom{m_{n'}} \conc
\creatom{s_1} \conc \ldots \conc \creatom{s_{k+2}} \conc
{} \\ & &
\rho_1(u_1) \conc \ldots \conc \rho_k(u_k) \conc
{} \\ & &
\addfield{s_{k+1}}{\stopf} \conc \addfield{s_{k+2}}{\stopf} \conc
{} \\ & &
\setspot{s}{s_1} \conc
\halt\;,
\end{aeqns}
\end{ldispl}
where the auxiliary functions $\rho_j$ from the set of all primitive
instructions of \PGLD\ to the set of all \PGA\ programs are defined as
follows ($1 \leq j \leq k$):
\begin{ldispl}
\begin{aeqns}
\rho_j(f.m) & = &
\addfield{s_j}{\focusf} \conc \addfield{s_j}{\methodf} \conc
\addfield{s_j}{\posf} \conc \addfield{s_j}{\negf} \conc
{} \\ & &
\setfield{s_j}{\focusf}{f} \conc \setfield{s_j}{\methodf}{m} \conc
\setfield{s_j}{\posf}{s_{j+1}} \conc
\setfield{s_j}{\negf}{s_{j+1}}\;, \\
\rho_j(\ptst{f.m}) & = &
\addfield{s_j}{\focusf} \conc \addfield{s_j}{\methodf} \conc
\addfield{s_j}{\posf} \conc \addfield{s_j}{\negf} \conc
 {} \\ & &
\setfield{s_j}{\focusf}{f} \conc \setfield{s_j}{\methodf}{m} \conc
\setfield{s_j}{\posf}{s_{j+1}} \conc
\setfield{s_j}{\negf}{s_{j+2}}\;, \\
\rho_j(\ntst{f.m}) & = &
\addfield{s_j}{\focusf} \conc \addfield{s_j}{\methodf} \conc
\addfield{s_j}{\posf} \conc \addfield{s_j}{\negf} \conc
 {} \\ & &
\setfield{s_j}{\focusf}{f} \conc \setfield{s_j}{\methodf}{m} \conc
\setfield{s_j}{\posf}{s_{j+2}} \conc
\setfield{s_j}{\negf}{s_{j+1}}\;,
\end{aeqns}
\end{ldispl}
\begin{ldispl}
\begin{aceqns}
\rho_j(\ajmp{l}) \hspace*{.5em} & = &
\addfield{s_j}{\ajmpf} \conc \setfield{s_j}{\ajmpf}{s_l} \quad
 & \mif 1 \leq l \leq k\;, \\
\rho_j(\ajmp{l}) & = & \addfield{s_j}{\stopf}
 & \mif \Not (1 \leq l \leq k)
\end{aceqns}
\end{ldispl}
and
\begin{iteml}
\item
$f_1,\ldots,f_n \in \Foci$ are the different foci that occur in
$u_1 \conc \ldots \conc u_k$;
\item
$m_1,\ldots,m_{n'} \in \Meth$ are the different methods that occur in
$u_1 \conc \ldots \conc u_k$;
\item
$s,s_1,\ldots,s_{k+2} \in \Spot \diff (\Foci \union \Meth)$.
\end{iteml}
In this definition, the omitted focus is considered to be $\md$.

The properties stated in the following proposition show that it is easy
to retrieve the \PGLD\ program $P$ from its representation constructed by
$\pgldmd(P)$.
\begin{proposition}
\label{prop-representation}
Let $P = u_1 \conc \ldots \conc u_k$ be a \PGLD\ program, and
let $\tup{\sigma,\alpha}$ be such that
$\MDS_{\tup{\sigma,\alpha}} =
 \apply{\extr{\pgldmd(P)}}{\md}{\MDS_\init}$.
Then $\yld(\equaltst{s}{s_1},\tup{\sigma,\alpha}) = \True$ and
for all $j,l \in [1,k]$, $f \in \Foci$ and $m \in \Meth$:
\begin{iteml}
\normalfont
\item
$u_j = f.m$ iff \\ ${} \quad$
\begin{tabular}[t]{@{}ll@{}}
$\yld(\getfield{u}{s_j}{\focusf},\tup{\sigma,\alpha}) = \True$, &
$\yld(\equaltst{u}{f},
      \eff(\getfield{u}{s_j}{\focusf},\tup{\sigma,\alpha})) = \True$, \\
$\yld(\getfield{v}{s_j}{\methodf},\tup{\sigma,\alpha}) = \True$, &
$\yld(\equaltst{v}{m},
      \eff(\getfield{u}{s_j}{\methodf},\tup{\sigma,\alpha})) = \True$, \\
$\yld(\getfield{s}{s_j}{\posf},\tup{\sigma,\alpha}) = \True$, &
$\yld(\equaltst{s}{s_{j+1}},
      \eff(\getfield{s}{s_j}{\posf},\tup{\sigma,\alpha})) = \True$, \\
$\yld(\getfield{s}{s_j}{\negf},\tup{\sigma,\alpha}) = \True$, &
$\yld(\equaltst{s}{s_{j+1}},
      \eff(\getfield{s}{s_j}{\negf},\tup{\sigma,\alpha})) = \True$;
\end{tabular}
\item
$u_j = \ptst{f.m}$ iff \\ ${} \quad$
\begin{tabular}[t]{@{}ll@{}}
$\yld(\getfield{u}{s_j}{\focusf},\tup{\sigma,\alpha}) = \True$, &
$\yld(\equaltst{u}{f},
      \eff(\getfield{u}{s_j}{\focusf},\tup{\sigma,\alpha})) = \True$, \\
$\yld(\getfield{v}{s_j}{\methodf},\tup{\sigma,\alpha}) = \True$, &
$\yld(\equaltst{v}{m},
      \eff(\getfield{u}{s_j}{\methodf},\tup{\sigma,\alpha})) = \True$, \\
$\yld(\getfield{s}{s_j}{\posf},\tup{\sigma,\alpha}) = \True$, &
$\yld(\equaltst{s}{s_{j+1}},
      \eff(\getfield{s}{s_j}{\posf},\tup{\sigma,\alpha})) = \True$, \\
$\yld(\getfield{s}{s_j}{\negf},\tup{\sigma,\alpha}) = \True$, &
$\yld(\equaltst{s}{s_{j+2}},
      \eff(\getfield{s}{s_j}{\negf},\tup{\sigma,\alpha})) = \True$;
\end{tabular}
\item
$u_j = \ntst{f.m}$ iff \\ ${} \quad$
\begin{tabular}[t]{@{}ll@{}}
$\yld(\getfield{u}{s_j}{\focusf},\tup{\sigma,\alpha}) = \True$, &
$\yld(\equaltst{u}{f},
      \eff(\getfield{u}{s_j}{\focusf},\tup{\sigma,\alpha})) = \True$, \\
$\yld(\getfield{v}{s_j}{\methodf},\tup{\sigma,\alpha}) = \True$, &
$\yld(\equaltst{v}{m},
      \eff(\getfield{u}{s_j}{\methodf},\tup{\sigma,\alpha})) = \True$, \\
$\yld(\getfield{s}{s_j}{\posf},\tup{\sigma,\alpha}) = \True$, &
$\yld(\equaltst{s}{s_{j+2}},
      \eff(\getfield{s}{s_j}{\posf},\tup{\sigma,\alpha})) = \True$, \\
$\yld(\getfield{s}{s_j}{\negf},\tup{\sigma,\alpha}) = \True$, &
$\yld(\equaltst{s}{s_{j+1}},
      \eff(\getfield{s}{s_j}{\negf},\tup{\sigma,\alpha})) = \True$;
\end{tabular}
\item
$u_j = \ajmp{l}$ iff \\ ${} \quad$
$\yld(\getfield{s}{s_j}{\ajmpf},\tup{\sigma,\alpha}) = \True$,
$\yld(\equaltst{s}{s_{l}},
      \eff(\getfield{s}{s_j}{\ajmpf},\tup{\sigma,\alpha})) = \True$;
\item
$u_j = \ajmp{l'}$ for some $l' \not\in [1,k]$ iff
$\yld(\hasfield{s_j}{\stopf},\tup{\sigma,\alpha}) = \True$.
\end{iteml}
\end{proposition}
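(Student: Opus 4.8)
The plan is to compute explicitly the state $\tup{\sigma,\alpha}$ of the molecular dynamics service that results from running $\pgldmd(P)$ on $\MDS_\init$, and then to read off each of the stated equivalences from that explicit description by a routine case analysis on $u_j$.

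For the computation, observe that $\pgldmd(P)$ is a finite \PGA\ program of the form $w_1 \conc \ldots \conc w_N \conc \halt$ in which every $w_i$ is a plain basic instruction $\md.m_i$ with $m_i$ a create-atom, set-spot, add-field or set-field method (in particular $m_i \not\in \Meth_\ga$). Hence $\extr{\pgldmd(P)} = \md.m_1 \bapf \ldots \bapf \md.m_N \bapf \Stop$ by the defining equations for thread extraction. I would then argue by induction on $i$ that, starting from the empty state of $\MDS_\init$, processing $m_1,\ldots,m_i$ never gives the reply $\Blocked$ (in fact all replies are $\True$), so that repeated application of TSA5 together with TSA1 yields that $\apply{\extr{\pgldmd(P)}}{\md}{\MDS_\init}$ equals $\MDS_{\tup{\sigma,\alpha}}$ with $\tup{\sigma,\alpha}$ obtained from the empty state by applying $\eff$ in turn to $m_1,\ldots,m_N$. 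Alongside this the induction maintains a complete description of the intermediate state; here the two standing assumptions do the essential work: because $\PAtom$ is infinite, $\newatom$ never returns $\bot$, so every application of a create-atom method succeeds and produces a fresh proto-atom, and because $\Foci$ and $\Meth$ are disjoint and $s,s_1,\ldots,s_{k+2}$ lie outside $\Foci \union \Meth$, the spots holding the atoms created in the first phase of $\pgldmd(P)$ are pairwise distinct and none is overwritten before the concluding $\setspot{s}{s_1}$. Writing $\hat a_f$, $\hat a_m$, $\hat b_j$ for the (pairwise distinct) atoms created in the spots $f \in \set{f_1,\ldots,f_n}$, $m \in \set{m_1,\ldots,m_{n'}}$, $s_j$ ($1 \leq j \leq k+2$), the invariant delivers that in the final state $\dom(\alpha)$ is exactly the set of these atoms; $\sigma$ maps each $f$ to $\hat a_f$, each $m$ to $\hat a_m$, each $s_j$ to $\hat b_j$, and $s$ to $\hat b_1$, and is undefined on all other spots; the atoms $\hat a_f$, $\hat a_m$ have no fields; $\hat b_{k+1}$ and $\hat b_{k+2}$ have exactly the field $\stopf$ with content $\bot$; and for $j \in [1,k]$ the atom $\hat b_j$ has exactly the fields that $\rho_j(u_j)$ installs --- namely $\focusf,\methodf,\posf,\negf$ with contents $\hat a_f,\hat a_m,\hat b_{j+1},\hat b_{j+1}$ when $u_j = f.m$, the same but with $\negf$ pointing to $\hat b_{j+2}$ when $u_j = \ptst{f.m}$, the same but with $\posf$ pointing to $\hat b_{j+2}$ when $u_j = \ntst{f.m}$, just $\ajmpf$ pointing to $\hat b_l$ when $u_j = \ajmp{l}$ with $1 \leq l \leq k$, and just $\stopf$ pointing to $\bot$ when $u_j = \ajmp{l}$ with $l \not\in [1,k]$.

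Given this description, every assertion of the proposition follows by unfolding the definitions of $\yld$ and $\eff$ for $\getfield{}{}{}$, $\equaltst{}{}$ and $\hasfield{}{}$, using that the scratch spots occurring in the queries ($u$, $v$ and the construction spot $s$) lie outside $\Foci \union \Meth$ and differ from $s_1,\ldots,s_{k+2}$, so no query disturbs a spot it also inspects. For example $\yld(\equaltst{s}{s_1},\tup{\sigma,\alpha}) = \True$ because $\sigma(s) = \hat b_1 = \sigma(s_1)$. For the $u_j = f.m$ clause: if $u_j = f.m$ then $\focusf,\methodf,\posf,\negf \in \dom(\alpha(\hat b_j))$, so the four get-field yield queries return $\True$, and the four equality checks succeed because $\getfield{u}{s_j}{\focusf}$ places $\hat a_f = \sigma(f)$ into its target, $\getfield{}{s_j}{\methodf}$ places $\hat a_m = \sigma(m)$, and the $\posf$ and $\negf$ get-fields both place $\hat b_{j+1} = \sigma(s_{j+1})$; conversely, if all the conditions hold for some $j,f,m$, then presence of $\focusf$ in $\alpha(\hat b_j)$ forces $u_j$ to be one of $f'.m'$, $\ptst{f'.m'}$, $\ntst{f'.m'}$, the $\posf$ and $\negf$ equality checks rule out the two test instructions (since $\hat b_{j+1} \neq \hat b_{j+2}$), and the $\focusf$ and $\methodf$ checks force $f' = f$ and $m' = m$ (since distinct create-atom invocations produce distinct atoms). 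The $\ptst{f.m}$ and $\ntst{f.m}$ clauses are handled analogously, the $\ajmp{l}$ clause uses that $\hat b_1,\ldots,\hat b_k$ are pairwise distinct so that $\yld(\equaltst{s}{s_l},\eff(\getfield{s}{s_j}{\ajmpf},\tup{\sigma,\alpha})) = \True$ singles out exactly the $l$ with $u_j = \ajmp{l}$, and the final clause uses that for $j \in [1,k]$ one has $\stopf \in \dom(\alpha(\hat b_j))$ iff $u_j = \ajmp{l'}$ with $l' \not\in [1,k]$.

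The only laborious step is setting up and carrying the state invariant through the induction of the computation; this is pure bookkeeping with no real subtlety, precisely because $\pgldmd(P)$ contains no test and no jump instructions and so merely performs a fixed linear sequence of state updates. Once the invariant is in hand, the verification of the equivalences is mechanical.
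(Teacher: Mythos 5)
Your proposal is correct and follows essentially the same route as the paper's proof: the paper likewise notes that all atom creations succeed, that the spots used in creations keep their contents, and that each $\rho_j(u_j)$ (and the two extra add-field instructions) modifies only its own atom, and then concludes by case distinction on $u_j$ as in the definition of $\rho_j$. Your version merely spells out the state computation (via the apply axioms and an explicit invariant) that the paper leaves as an easy consequence of the assumptions on $\Foci$, $\Meth$ and $\PAtom$.
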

\begin{proof}
From the assumption that $\Foci$ and $\Meth$ are disjoint, the assumption
that $\PAtom$ is infinite, and the definition of $\pgldmd$, it follows
easily that:
\begin{iteml}
\item
all atom creations are successful;
\item
the content of each spot used in an atom creation does not change after
the atom creation concerned;
\item
$\rho_j(u_j)$ modifies the atom that is the contents of spot $s_j$ only
($1 \leq j \leq k$),
$\addfield{s_{k+1}}{\stopf}$ modifies the atom that is the contents of
spot $s_{k+1}$ only, and
$\addfield{s_{k+2}}{\stopf}$ modifies the atom that is the contents of
spot $s_{k+2}$ only.
\end{iteml}
From this, the properties stated in the proposition follow
straightforwardly by case distinction as in the definition of $\rho_j$.
\qed
\end{proof}

\section{Interpreting Programs Represented by Molecules}
\label{sect-interpretation}

In this section, we introduce a \PGA\ program for interpreting \PGLD\
programs represented by molecules.

The idea is that:
\begin{iteml}
\item
if the atom to be interpreted has the field $\stopf$, then the
interpretation terminates;
\item
if the atom to be interpreted has the field $\ajmpf$, then the
interpretation proceeds with the atom to which it is linked via this
field;
\item
otherwise, first the basic instruction represented by the atoms to which it
is linked via the fields $\focusf$ and $\methodf$ is executed and then
the interpretation proceeds with the atom to which it is linked via the
field $\posf$ or the field $\negf$, depending upon the reply being
positive or negative.
\end{iteml}

The following is the \PGA\ program for interpreting \PGLD\ programs
represented by molecules:
\begin{ldispl}
\begin{aeqns}
(\ptst{\hasfield{s}{\stopf}} \conc \halt \conc
{} \\ \phantom{(}
 \ptst{\hasfield{s}{\ajmpf}} \conc \fjmp{9} \conc
{} \\ \phantom{(}
 \getfield{u}{s}{\focusf} \conc \getfield{v}{s}{\methodf} \conc
 \ptst{\genact{u}{v}} \conc \fjmp{3} \conc
  \getfield{s}{s}{\negf} \conc \fjmp{4} \conc
  \getfield{s}{s}{\posf} \conc \fjmp{2} \conc
{} \\ \phantom{(}
 \getfield{s}{s}{\ajmpf}) \rep\;,
\end{aeqns}
\end{ldispl}
where $u,v \in \Spot \diff (\Foci \union \Meth)$.
Again, the omitted focus is considered to be $\md$.
Below, we write $I$ for this \PGA\ program.

The program $I$ interprets \PGLD\ programs correctly in the sense that,
for all \PGLD\ programs $P$, the intended behaviour of $P$ under
execution coincides with the behaviour of the interpreter under
execution on interaction with the molecular dynamics service that holds
the representation of $P$ constructed by the program $\pgldmd(P)$ when
abstracted from $\Tau$.
This is stated rigorously by Theorem~\ref{theorem-correctness} below.
The theorem is preceded by Proposition~\ref{prop-interpretation} which is
used in the proof of the theorem.
The proposition concerns the local use of the spots $u$ and $v$ in $I$.

For convenience, we introduce some abbreviations.
Let $P = u_1 \conc \ldots \conc u_k$ be a \PGLD\ program and let
$i \in [1,k]$.
Then we write $\MDS_{\repr(P)}$ for
$\apply{\extr{\pgldmd(P)}}{\md}{\MDS_\init}$ and $\MDS_{\repr(i,P)}$ for
$\apply{\extr{\setspot{s}{s_i} \conc \halt}}{\md}{\MDS_{\repr(P)}}$.

\begin{proposition}
\label{prop-interpretation}
Let $P = u_1 \conc \ldots \conc u_k$ be a \PGLD\ program,
let $i \in [1,k]$, and
let $\MDS'_{\repr(i,P)}$ be such that
$\MDS_{\repr(i,P)} =
 \apply{\extr{\clrspot{u} \conc \clrspot{v} \conc \halt}}{\md}
       {\MDS'_{\repr(i,P)}}$.
Then
$\abstr(\use{\extr{I}}{\md}{\MDS_{\repr(i,P)}}) =
 \abstr(\use{\extr{I}}{\md}{\MDS'_{\repr(i,P)}})$.
\end{proposition}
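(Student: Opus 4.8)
The plan is to show that the two molecular dynamics services $\MDS_{\repr(i,P)}$ and $\MDS'_{\repr(i,P)}$ differ only in the contents of the spots $u$ and $v$, and that the interpreter $I$, when it accesses $u$ or $v$, always overwrites their previous contents before reading them. Concretely, let $\tup{\sigma,\alpha}$ and $\tup{\sigma',\alpha'}$ be the initial states of $\MDS_{\repr(i,P)}$ and $\MDS'_{\repr(i,P)}$, respectively. First I would establish, from the defining equation for $\pgldmd$ together with Proposition~\ref{prop-representation} and the choice $u,v \in \Spot \diff (\Foci \union \Meth)$ with $u,v$ distinct from $s$ and from all $s_j$, that $u$ and $v$ are never written by $\pgldmd(P)$ nor by $\setspot{s}{s_i}$; hence $\sigma(u) = \sigma_{\repr(P)}(u)$ and similarly for $v$. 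Then, since $\MDS_{\repr(i,P)}$ is obtained from $\MDS'_{\repr(i,P)}$ by applying $\clrspot{u} \conc \clrspot{v}$, we get $\alpha = \alpha'$ (no atoms are created or modified by clearing spots) and $\sigma = \sigma' \owr \maplet{u}{\bot} \owr \maplet{v}{\bot}$, while $\sigma'$ agrees with $\sigma$ on every spot other than $u,v$.

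The heart of the argument is an invariant for the behaviour $\extr{I}$. Unfolding the repetition, every iteration of $I$ that does not terminate and is not a jump-chasing step begins with the two instructions $\getfield{u}{s}{\focusf} \conc \getfield{v}{s}{\methodf}$, which set $\sigma(u)$ and $\sigma(v)$ from the fields of the current atom \emph{before} either spot is inspected by $\genact{u}{v}$; and the branches that follow ($\getfield{s}{s}{\posf}$, $\getfield{s}{s}{\negf}$, $\getfield{s}{s}{\ajmpf}$) never read $u$ or $v$. So I would prove, by a suitable induction on the projections $\proj{n}{\extr{I}}$ (using the complementary apply operator and axiom TSA10, and $\AIP$ for the use side), the following statement: for every reachable configuration of $\use{\extr{I}}{\md}{\cdot}$, the two runs from states that agree on $\Spot \diff \set{u,v}$ produce identical sequences of transformed actions and reach states that again agree on $\Spot \diff \set{u,v}$ — the first $\getfield$ pair in each loop body re-synchronises the contents of $u$ and $v$. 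Because $\use{\ph}{\md}{\ph}$ turns every molecular-dynamics action other than a successful $\genact{\ph}{\ph}$ into $\Tau$, the only non-$\Tau$ actions ever exhibited are the generated actions $\gares(\sigma,u,v)$, which by the argument above are the same in both runs; applying $\abstr$ then discards all the $\Tau$'s, yielding the claimed equality $\abstr(\use{\extr{I}}{\md}{\MDS_{\repr(i,P)}}) = \abstr(\use{\extr{I}}{\md}{\MDS'_{\repr(i,P)}})$.

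The main obstacle I expect is making the coinductive/invariant bookkeeping precise: one has to track not a single state but the \emph{derived service} after an arbitrary finite history of methods, show that along any such history generated by $\extr{I}$ the "agree off $\set{u,v}$" relation is preserved, and handle the jump-chasing prefix (the $\getfield{s}{s}{\ajmpf}$ branch can in principle loop, but it does not touch $u,v$, and if it loops forever both runs diverge identically, which $\AIP$/TSA10 handles uniformly). Formally this is cleanest as: define the relation $R$ on states by "$t_1 \mathrel{R} t_2$ iff $t_1$ and $t_2$ agree on all spots outside $\set{u,v}$ and on all atoms and fields", check that the defining equations of thread extraction together with axioms TSU1--TSU9 and TT1--TT4 propagate $R$ through one loop iteration, and conclude by $\AIP$. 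Once the invariant is set up, each case is the routine calculation I will not grind through here.
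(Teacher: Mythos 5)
Your proposal is correct and takes essentially the same route as the paper's own (very terse) proof: reduce the claimed equality via \AIP\ to equality of all finite projections, and establish these by induction on the depth, simultaneously for all positions $i \in [1,k]$, using precisely the observation that the two services agree everywhere outside $\set{u,v}$ and that $I$ overwrites $u$ and $v$ before they are inspected by the generate action method in each loop iteration. The paper dismisses this bookkeeping as a straightforward induction on $n$, so your invariant formulation is just a fleshed-out version of the same argument (your appeal to TSA10 is superfluous here, since only the use operator is involved).
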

\begin{proof}
By \AIP, it is sufficient to prove that for all $n \in \Nat$, for all
$i \in [1,k]$,
$\proj{n}{\abstr(\use{\extr{I}}{\md}{\MDS_{\repr(i,P)}})} =
 \proj{n}{\abstr(\use{\extr{I}}{\md}{\MDS'_{\repr(i,P)}})}$.
This property is straightforward to prove by induction on $n$.
\qed
\end{proof}

\begin{theorem}
\label{theorem-correctness}
For all \PGLD\ programs $P$:
\begin{ldispl}
\extr{P}_\sPGLD =
\abstr(\use{\extr{I}}{\md}{\MDS_{\repr(P)}})\;.
\end{ldispl}
\end{theorem}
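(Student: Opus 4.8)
=== PROOF PROPOSAL FOR THEOREM~\ref{theorem-correctness} ===

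The plan is to prove, for a fixed \PGLD\ program $P = u_1 \conc \ldots \conc u_k$, that $\extr{P}_\sPGLD$ equals the abstracted behaviour of the interpreter running against the molecular representation of $P$. Since $\extr{P}_\sPGLD = \extr{\pgldpga(P)}$ and both sides are (in general) infinite threads, the proof will go through \AIP: it suffices to show that the two threads have the same projection at every finite depth $n$. The natural induction is not on $n$ alone but on a more refined statement indexed by the ``program counter'': for every $i \in [1,k]$, the thread $\abstr(\use{\extr{I}}{\md}{\MDS_{\repr(i,P)}})$ equals the behaviour of $\pgldpga(P)$ started at the $i$-th instruction (equivalently, $\extr{\fjmp{\text{something}} \conc \pgldpga(P)}$ positioned at $\psi_i(u_i)$), and moreover that when the interpreter is pointed at a ``stop'' atom it behaves as $\Stop$. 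This generalized statement is what makes the induction hypothesis strong enough to handle jumps.

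The key steps, in order, are as follows. First, I would set up the correspondence between interpreter configurations and \PGLD\ program counters: using Proposition~\ref{prop-representation}, the state $\MDS_{\repr(i,P)}$ encodes exactly the structure of $u_i$ (which fields $s_i$ has, and what they point to). Second, I would do a case analysis on the form of $u_i$, unfolding the repetition in $I$ once (via PGA4/unfolding) and computing the thread extracted from that one pass using the defining equations for $\extr{\ph}$ together with the use-operator axioms TSU1--TSU9 and the service description in Tables~\ref{eff-mds}--\ref{act-mds}: (a) if $u_i = \ajmp{l}$ with $l \notin [1,k]$, the atom has field $\stopf$, the test $\ptst{\hasfield{s}{\stopf}}$ succeeds, $\halt$ is reached, and the use-term reduces to $\Stop$, matching $\psi_i(\ajmp{l}) = \halt$; (b) if $u_i = \ajmp{l}$ with $l \in [1,k]$, the first test fails, the second ($\hasfield{s}{\ajmpf}$) succeeds, $\fjmp{9}$ then the $\getfield{s}{s}{\ajmpf}$ reloads $s$ with $s_l$, and after wrapping around the repetition we are in configuration $\MDS_{\repr(l,P)}$ — so the induction hypothesis applies and this matches the jump semantics of $\psi_i$; (c) if $u_i$ is $f.m$, $\ptst{f.m}$ or $\ntst{f.m}$, both field tests fail, the interpreter fetches the focus/method atoms into $u,v$, executes $\ptst{\genact{u}{v}}$ — here TSU7 turns $\genact{u}{v}$ into the actual action $f.m$ via the $\act$ function (Table~\ref{act-mds} and $\gares$), producing a genuine $f.m$ step with the reply coming from the environment — and then, depending on the reply, follows the $\posf$ or $\negf$ link to reach $\MDS_{\repr(i',P)}$ for the appropriate successor $i'$, matching $\extr{\psi_i(u_i) \conc \ldots}$. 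Third, I would invoke Proposition~\ref{prop-interpretation} to discard the auxiliary spots $u,v$ between passes, so that the configuration after one pass really is of the form $\MDS_{\repr(i',P)}$ (with $u,v$ cleared) rather than carrying stale values; this is what licenses applying the induction hypothesis cleanly. Finally, I would assemble these one-pass computations into an equality of depth-$n$ projections by induction on $n$, and conclude by \AIP; abstraction (axioms TT1--TT4) absorbs the bookkeeping $\Tau$'s generated by TSU5--TSU7 on the auxiliary $\getfield$, $\hasfield$, $\addfield$ and jump-administration actions, leaving exactly the $f.m$ actions of $\extr{P}_\sPGLD$.

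The main obstacle I expect is the bookkeeping in case (c): one must verify that the single $f.m$ action the interpreter emits has the \emph{same} reply-branching behaviour as the corresponding postconditional composition in $\extr{\pgldpga(P)}$, i.e.\ that the interpreter, after the $\genact$-turned-$f.m$ step, routes to the $\posf$-successor on a positive reply and the $\negf$-successor on a negative reply, and that for a \emph{plain} basic instruction $f.m$ (where $\posf$ and $\negf$ both point to $s_{i+1}$) this coincides with $\extr{f.m \conc \ldots}$ where the reply is disregarded. This requires carefully tracking, through TSU7 and the subsequent reply-dependent TSU5/TSU6 steps, that the service state threaded into the continuation is the one that correctly reflects Proposition~\ref{prop-representation} at the new program counter — and in particular that executing the basic instruction $f.m$ on the service does not disturb the molecular representation (the molecule lives in the $\md$ service, whereas $f.m$ with $f \neq \md$... but here all foci collapse since we identified $\BInstr$ with $\set{f.m}$, so one must check the representation part of the state is untouched by reasoning about which spots and atoms $f.m$ can affect). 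The edge cases $i = k$ (where the ``two extra atoms'' $s_{k+1}, s_{k+2}$ created by $\pgldmd$ are what make the uniform treatment work) and jumps/tests whose target falls off the end of the program (mapping to $\halt$) also need explicit checking against the three clauses of $\psi_j$.
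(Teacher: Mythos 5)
Your proposal shares the paper's core machinery: the per-position configurations $\MDS_{\repr(i,P)}$, the appeal to Propositions~\ref{prop-representation} and~\ref{prop-interpretation}, and the case analysis on $u_i$ obtained by unfolding the interpreter's repetition once and pushing the use axioms through one pass. Where you diverge is the finishing move. The paper does not run \AIP\ on the two threads at all: it introduces the auxiliary terms $\extr{i,P}$, records one derivable equation for each of them, checks that the corresponding equations hold for the threads $\abstr(\use{\extr{I}}{\md}{\MDS_{\repr(i,P)}})$, and concludes that both families are solutions of one and the same guarded recursive specification, so that equality follows from uniqueness of solutions (RSP). Your \AIP\ finish is more computational; the paper's RSP formulation avoids reasoning about projection depths altogether.

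In your finish there is a genuine gap, precisely at the direct-jump case. If $u_i = \ajmp{l}$ with $l \in [1,k]$, the interpreter's pass over the atom for $u_i$ performs only actions that the use operator turns into $\Tau$ and that abstraction then erases, so no visible action is emitted: the goal $\proj{n}{\abstr(\use{\extr{I}}{\md}{\MDS_{\repr(i,P)}})} = \proj{n}{\extr{i,P}}$ reduces to the same statement at the \emph{same} depth $n$ for position $l$. The induction hypothesis, which is available only for depths below $n$, does not apply, and ``indexing the statement by the program counter'' does not by itself supply a decreasing measure. You need a secondary well-founded measure (e.g.\ the length of the chain of direct jumps issuing from position $i$), and a separate argument for cyclic jump chains, where no such measure exists: there $\extr{i,P} = \DeadEnd$ by the rule for cyclic jump chains, while on the interpreter side the abstracted thread is $\abstr$ of something equal to $\Tau^\omega$, so you must invoke the derivable equation $\abstr(\Tau^\omega) = \DeadEnd$. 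The paper's formulation absorbs exactly this difficulty in its side condition that every right-hand side lying in $T$ can be rewritten, using the equations of $E$, to a term outside $T$ (i.e.\ jump chains must bottom out), with cyclic-chain positions equated to $\DeadEnd$ outright. Your handling of case (c), including the concern about whether the emitted $f.m$ can disturb the molecular state and the role of the extra atoms $s_{k+1},s_{k+2}$, is in line with what the paper's proof needs; once you add the jump-chain measure and the cyclic-chain case, your route goes through.
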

\begin{proof}
In the proof, we make use of an auxiliary function $\extr{\ph,\ph}$ from
the set of all natural numbers and the set of all \PGLD\ programs to the
set of all closed terms of sort $\Thr$.
It gives, for each natural number $i$ and \PGLD\ program
$u_1 \conc \ldots \conc u_k$, a closed term of sort $\Thr$ that denotes
the behaviour of $u_1 \conc \ldots \conc u_k$ when executed from
position $i$ if $1 \leq i \leq k$ and $\Stop$ otherwise.
This function is defined as follows:
\begin{ldispl}
\begin{aceqns}
\extr{i,u_1 \conc \ldots \conc u_k} & = &
\multicolumn{2}{@{}l@{}}
{\extr{\psi_i(u_i) \conc \ldots \conc \psi_k(u_k) \conc
       \halt \conc \halt \conc
       (\psi_1(u_1) \conc \ldots \conc \psi_k(u_k) \conc
        \halt \conc \halt)\rep}
} \\ & &
& \hspace*{17.25em} \mif 1 \leq i \leq k\;,
\\
\extr{i,u_1 \conc \ldots \conc u_k} & =  & \Stop
& \hspace*{17.25em} \mif \Not 1 \leq i \leq k\;
\end{aceqns}
\end{ldispl}
(where $\psi_j$ is as in the definition of $\pgldpga$).
It follows easily from the definition of $\extr{\ph,\ph}$ and the axioms
of \PGA\ that if $1 \leq i \leq k$:
\begin{ldispl}
\begin{aceqns}
\extr{i,u_1 \conc \ldots \conc u_k} & = &
a \bapf \extr{i+1,u_1 \conc \ldots \conc u_k}
& \mif u_i = a\;, \\
\extr{i,u_1 \conc \ldots \conc u_k} & = &
\pcc{\extr{i+1,u_1 \conc \ldots \conc u_k}}{a}
    {\extr{i+2,u_1 \conc \ldots \conc u_k}}
& \mif u_i = \ptst{a}\;, \\
\extr{i,u_1 \conc \ldots \conc u_k} & = &
\pcc{\extr{i+2,u_1 \conc \ldots \conc u_k}}{a}
    {\extr{i+1,u_1 \conc \ldots \conc u_k}}
& \mif u_i = \ntst{a}\;, \\
\extr{i,u_1 \conc \ldots \conc u_k} & = &
\extr{l,u_1 \conc \ldots \conc u_k}
& \mif u_i = \ajmp{l}\;.
\end{aceqns}
\end{ldispl}

Let $P = u_1 \conc \ldots \conc u_k$ be a \PGLD\ program,
let
\begin{ldispl}
T  = \set{\extr{i,P} \where i \in [1,k]}\;, \\
T' = \set{\abstr(\use{\extr{I}}{\md}{\MDS_{\repr(i,P)}}) \where
          i \in [1,k]}\;,
\end{ldispl}
and let $\funct{\beta}{T}{T'}$ be the bijection defined by
\begin{ldispl}
\beta(\extr{i,P}) =
\abstr(\use{\extr{I}}{\md}{\MDS_{\repr(i,P)}})\;.
\end{ldispl}
For each closed term $p'$ of sort $\Thr$, write $\beta^*(p')$ for $p'$
with, for all $p \in T$, all occurrences of $p$ in $p'$ replaced by
$\beta(p)$.
Then, using the equations concerning the auxiliary function
$\extr{\ph,\ph}$ given above and Propositions~\ref{prop-representation}
and~\ref{prop-interpretation}, it is straightforward to prove that there
exists a set~$E$ consisting of one derivable equation $p = p'$ for each
$p \in T$ such that, for all equations $p = p'$ in $E$:
\begin{iteml}
\item
the equation $\beta(p) = \beta^*(p')$ is derivable;
\item
$p' \in T$ only if $p'$ can be rewritten to a $p'' \not\in T$ using the
equations in $E$ from left to right.
\end{iteml}
Because
$\extr{P}_\sPGLD = \extr{1,P}$ and
$\abstr(\use{\extr{I}}{\md}{\MDS_{\repr(P)}}) =
 \abstr(\use{\extr{I}}{\md}{\MDS_{\repr(1,P)}})$,
this means that
$\extr{P}_\sPGLD$ and $\abstr(\use{\extr{I}}{\md}{\MDS_{\repr(P)}})$
are solutions of the same guarded recursive specification.
Because guarded recursive specifications have unique solutions, it
follows immediately that
$\extr{P}_\sPGLD = \abstr(\use{\extr{I}}{\md}{\MDS_{\repr(P)}})$.
\qed
\end{proof}
In the program $\pgldmd(u_1 \conc \ldots \conc u_k)$, we could have
replaced $\setspot{s}{s_1} \conc \halt$ by
$\setspot{s}{s_1} \conc
 \clrspot{s_1} \conc \ldots \conc \clrspot{s_{k+2}} \conc \halt$
because the program $I$ does not use the spots $s_1,\ldots,s_{k+2}$.
However, we cannot conceive a proof of Theorem~\ref{theorem-correctness}
in which these auxiliary spots are not found.
The main problem is that we cannot find a way of formulating the gist of
Proposition~\ref{prop-representation}, which looks to be material to a
proof of the theorem, without referring to the spots
$s_1,\ldots,s_{k+2}$.

\section{\PGLD\ with Indirect Jumps}
\label{sect-PGLDij}

In this section, we introduce a variant of \PGLD\ with indirect jump
instructions.
This variant is called \PGLDij.
However, preceding the introduction of \PGLDij, we give a state-based
description of the very simple family of services that constitute a
register file of which the registers can contain natural numbers up to
some bound.
This register file will be used later on to describe the behaviour of
\PGLDij\ programs.

It is assumed that a fixed but arbitrary number $\maxr$ has been given,
which is considered the number of registers available.
It is also assumed that a fixed but arbitrary number $\maxn$ has been
given, which is considered the greatest natural number that can be
contained in a register.

The register file services accept the following methods:
\begin{itemize}
\item
for each $i \in [0,\maxr]$ and $n \in [0,\maxn]$,
a \emph{register set method} $\setr{:}i{:}n$;
\item
for each $i \in [0,\maxr]$ and $n \in [0,\maxn]$,
a \emph{register test method} $\eqr{:}i{:}n$.
\end{itemize}
We write $\Meth_\rf$ for the set
$\set{\setr{:}i{:}n,\eqr{:}i{:}n \where
      i \in [0,\maxr] \And n \in [0,\maxn]}$.
It is assumed that $\Meth_\rf \subseteq \Meth$.

The methods accepted by register file services can be explained as
follows:
\begin{itemize}
\item
$\setr{:}i{:}n$\,:
the contents of register $i$ becomes $n$ and the reply is $\True$;
\item
$\eqr{:}i{:}n$\,:
if the contents of register $i$ equals $n$, then nothing changes and the
reply is $\True$; otherwise nothing changes and the reply is $\False$.
\end{itemize}

Let $\funct{s}{[1,\maxr]}{[0,\maxn]}$.
Then we write $\RF_s$ for the service with initial state $s$ described
by $S = (\mapof{[1,\maxr]}{[0,\maxn]}) \union \set{\undef}$, where
$\undef \not\in \mapof{[1,\maxr]}{[0,\maxn]}$, and the functions $\eff$,
$\yld$ and $\act$ defined as follows ($i \in [0,\maxr]$,
$n \in [0,\maxn]$, $\funct{\rho}{[1,\maxr]}{[0,\maxn]}$):
\begin{ldispl}
\begin{gceqns}
\eff(\setr{:}i{:}n,\rho) = \rho \owr \maplet{i}{n}\;,
\\
\eff(\eqr{:}i{:}n,\rho)  = \rho\;,
\\
\eff(m,\rho)      = \undef       & \mif m \not\in \Meth_\rf\;,
\\
\eff(m,\undef) = \undef\;,
\eqnsep
\yld(\setr{:}i{:}n,\rho) = \True\;,
\\
\yld(\eqr{:}i{:}n,\rho) = \True  & \mif \rho(i) = n\;,
\\
\yld(\eqr{:}i{:}n,\rho) = \False & \mif \rho(i) \neq n\;,
\\
\yld(m,\rho)      = \Blocked     & \mif m \not\in \Meth_\rf\;,
\\
\yld(m,\undef) = \Blocked\;,
\eqnsep
\act(m,\rho) = \Tau\;,
\\
\act(m,\undef) = \Tau\;.
\end{gceqns}
\end{ldispl}%
We write $\RF_\mathrm{init}$ for
$\RF_{\maplet{1}{0} \owr \ldots \owr \maplet{I}{0}}$.

We pass on to \PGLD\ with indirect jump instructions.
In \PGLDij, it is assumed that there is a fixed but arbitrary finite set
of \emph{foci} $\Foci$ with $\rf \in \Foci$ and a fixed but arbitrary
finite set of \emph{methods} $\Meth$.
Moreover, we adopt the assumptions made about register file services
above.
The set $\set{f.m \where f \in \Foci \diff \set{\rf}, m \in \Meth}$ is
taken as the set $\BInstr$ of basic instructions.

\PGLDij\ has the following primitive instructions:
\begin{iteml}
\item
for each $a \in \BInstr$, a \emph{plain basic instruction} $a$;
\item
for each $a \in \BInstr$, a \emph{positive test instruction} $\ptst{a}$;
\item
for each $a \in \BInstr$, a \emph{negative test instruction} $\ntst{a}$;
\item
for each $l \in \Nat$, a \emph{direct absolute jump instruction}
$\ajmp{l}$;
\item
for each $i \in [1,\maxr]$ and $n \in [1,\maxn]$,
a \emph{register set instruction} $\setr{:}i{:}n$;
\item
for each $i \in [1,\maxr]$, an \emph{indirect absolute jump instruction}
$\iajmp{i}$.
\end{iteml}
\PGLDij\ programs have the form $u_1 \conc \ldots \conc u_k$, where
$u_1,\ldots,u_k$ are primitive instructions of \PGLDij.

The plain basic instructions, the positive test instructions, the
negative test instructions, and the direct absolute jump instructions
are as in \PGLD.
The effect of a register set instruction $\setr{:}i{:}n$ is that the
content of register $i$ becomes~$n$.
The effect of an indirect absolute jump instruction $\iajmp{i}$ is that
execution proceeds with the $l$-th instruction of the program concerned,
where $l$ is the content of register $i$.
If $\iajmp{i}$ is itself the $l$-th instruction, then deadlock occurs.
If $l$ equals $0$ or $l$ is greater than the length of the program,
termination occurs.
We stipulate that the content of each register is initially $0$.

Like before, we define the meaning of \PGLDij\ programs by means of a
function $\pgldijpgld$ from the set of all \PGLDij\ programs to the set
of all \PGLD\ programs.
This function is defined by
\begin{ldispl}
\pgldijpgld(u_1 \conc \ldots \conc u_k) = \\ \quad
\psi(u_1) \conc \ldots \conc \psi(u_k) \conc
\ajmp{0} \conc \ajmp{0}  \conc {} \\ \quad
\ptst{\rf.\eqr{:}1{:}1} \conc \ajmp{1} \conc \ldots \conc
\ptst{\rf.\eqr{:}1{:}n} \conc \ajmp{n} \conc \ajmp{0} \conc {} \\
\qquad \vdots  \\ \quad
\ptst{\rf.\eqr{:}\maxr{:}1} \conc \ajmp{1} \conc \ldots \conc
\ptst{\rf.\eqr{:}\maxr{:}n} \conc \ajmp{n} \conc \ajmp{0}\;,
\end{ldispl}%
where $n = \min(k,\maxn)$ and the auxiliary function $\psi$ from the set
of all primitive instructions of \PGLDij\ to the set of all primitive
instructions of \PGLD\ is defined as follows:
\begin{ldispl}
\begin{aceqns}
\psi(a)         & = & a\;, \\
\psi(\ptst{a})  & = & \ptst{a}\;, \\
\psi(\ntst{a})  & = & \ntst{a}\;, \\
\psi(\ajmp{l})  & = & \ajmp{l} & \mif l \leq k\;, \\
\psi(\ajmp{l})  & = & \ajmp{0} & \mif l   >  k\;, \\
\psi(\setr{:}i{:}n) & = & \rf.\setr{:}i{:}n\;, \\
\psi(\iajmp{i}) & = & \ajmp{l_i}\;,
\end{aceqns}
\end{ldispl}%
and for each $i \in [1,\maxr]$:
\begin{ldispl}
\begin{aeqns}
l_i & = & k + 3 + (2 \mul \min(k,\maxn) + 1) \mul (i - 1)\;.
\end{aeqns}
\end{ldispl}%
The idea is that each indirect absolute jump can be replaced by a direct
absolute jump to the beginning of the instruction sequence
\begin{ldispl}
\begin{aeqns}
\ptst{\rf.\eqr{:}i{:}1} \conc \ajmp{1} \conc \ldots \conc
\ptst{\rf.\eqr{:}i{:}n} \conc \ajmp{n} \conc \ajmp{0}\;,
\end{aeqns}
\end{ldispl}%
where $i$ is the register concerned and $n = \min(k,\maxn)$.
The execution of this instruction sequence leads to the intended jump
after the content of the register concerned has been found by a linear
search.
To enforce termination of the program after execution of its last
instruction if the last instruction is a plain basic instruction, a
positive test instruction or a negative test instruction,
$\ajmp{0} \conc \ajmp{0}$ is appended to
$\psi(u_1) \conc \ldots \conc \psi(u_k)$.
Because the length of the translated program is greater than $k$, care
is taken that there are no direct absolute jumps to instructions with a
position greater than $k$.
Obviously, the linear search for the content of a register can be
replaced by a binary search.

Let $P$ be a \PGLDij\ program.
Then $\pgldijpgld(P)$ represents the meaning of $P$ as a \PGLD\ program.
The intended behaviour of $P$ is the behaviour of $\pgldijpgld(P)$ on
interaction with a register file when abstracted from $\Tau$.
That is, the \emph{behaviour} of $P$, written $\extr{P}_\sPGLDij$, is
$\abstr(\use{\extr{\pgldijpgld(P)}_\sPGLD}{\rf}{\RF_\mathrm{init}})$.


A slightly different variant of \PGLD\ with indirect jump instructions
is introduced in~\cite{BM07e} under the same name.

\section{The Interpretation of \PGLDij\ Programs}
\label{sect-interpretation-ij}

In this section, we associate each \PGLDij\ program with a \PGA\ program
for constructing a representation of the \PGLDij\ program by a molecule
and introduce a \PGA\ program for interpreting \PGLDij\ programs
represented by molecules.
This amounts to enhancing the \PGA\ programs given in
Sections~\ref{sect-representation} and~\ref{sect-interpretation}.
\PGLDij\ programs without indirect absolute jump instructions are
represented and interpreted exactly as before.

The idea of the enhancements is that:
\begin{iteml}
\item
an atom is created for each of the registers;
\item
each atom that corresponds to a register is handled as if it concerns a
direct jump instruction to the instruction with the content of the
register as position;
\item
each atom that corresponds to an indirect jump instruction is handled as
if it concerns a direct jump instruction to the direct jump instruction
that takes the place of the register concerned;
\item
each register set instruction is handled as if it concerns an
instruction for overwriting the direct jump instruction that takes the
place of the register concerned.
\end{iteml}

We define the \PGA\ programs for constructing representations of
\PGLDij\ programs by means of a function $\pgldijmd$ from the set of all
\PGLDij\ programs to the set of all \PGA\ programs.
This function is defined by
\pagebreak[2]
\begin{ldispl}
\begin{geqns}
\pgldijmd(u_1 \conc \ldots \conc u_k) =
{} \\ \quad
\creatom{f_1} \conc \ldots \conc \creatom{f_n} \conc
\creatom{m_1} \conc \ldots \conc \creatom{m_{n'}} \conc
\creatom{s_1} \conc \ldots \conc \creatom{s_{k+2}} \conc
\creatom{s'_1} \conc \ldots \conc \creatom{s'_I} \conc
{} \\ \quad
\rho_1(u_1) \conc \ldots \conc \rho_k(u_k) \conc
{} \\ \quad
\addfield{s_{k+1}}{\stopf} \conc \addfield{s_{k+2}}{\stopf} \conc
{} \\ \quad
\addfield{s'_1}{\ajmpf} \conc \setfield{s'_1}{\ajmpf}{s_{k+2}} \conc
 \ldots \conc
\addfield{s'_I}{\ajmpf} \conc \setfield{s'_I}{\ajmpf}{s_{k+2}} \conc
{} \\ \quad
\setspot{s}{s_1} \conc
\halt\;,
\end{geqns}
\end{ldispl}
where the auxiliary functions $\rho_j$ from the set of all primitive
instructions of \PGLDij\ to the set of all \PGA\ programs are defined as
follows ($1 \leq j \leq k$):
\begin{ldispl}
\begin{aeqns}
\rho_j(f.m) & = &
\addfield{s_j}{\focusf} \conc \addfield{s_j}{\methodf} \conc
\addfield{s_j}{\posf} \conc \addfield{s_j}{\negf} \conc
{} \\ & &
\setfield{s_j}{\focusf}{f} \conc \setfield{s_j}{\methodf}{m} \conc
\setfield{s_j}{\posf}{s_{j+1}} \conc
\setfield{s_j}{\negf}{s_{j+1}}\;, \\
\rho_j(\ptst{f.m}) & = &
\addfield{s_j}{\focusf} \conc \addfield{s_j}{\methodf} \conc
\addfield{s_j}{\posf} \conc \addfield{s_j}{\negf} \conc
 {} \\ & &
\setfield{s_j}{\focusf}{f} \conc \setfield{s_j}{\methodf}{m} \conc
\setfield{s_j}{\posf}{s_{j+1}} \conc
\setfield{s_j}{\negf}{s_{j+2}}\;, \\
\rho_j(\ntst{f.m}) & = &
\addfield{s_j}{\focusf} \conc \addfield{s_j}{\methodf} \conc
\addfield{s_j}{\posf} \conc \addfield{s_j}{\negf} \conc
 {} \\ & &
\setfield{s_j}{\focusf}{f} \conc \setfield{s_j}{\methodf}{m} \conc
\setfield{s_j}{\posf}{s_{j+2}} \conc
\setfield{s_j}{\negf}{s_{j+1}}\;,
\end{aeqns}
\\
\begin{aceqns}
\rho_j(\ajmp{l}) \hspace*{.5em} & = &
\addfield{s_j}{\ajmpf} \conc \setfield{s_j}{\ajmpf}{s_l}
 & \mif 1 \leq l \leq k\;, \\
\rho_j(\ajmp{l}) & = & \addfield{s_j}{\stopf}
 & \mif \Not (1 \leq l \leq k)\;, \\
\rho_j(\setr{:}i{:}l) & = &
\addfield{s_j}{\regf} \conc \addfield{s_j}{\contf} \conc
\addfield{s_j}{\nxtf} \conc
 {} \\ & &
\setfield{s_j}{\regf}{s'_i} \conc \setfield{s_j}{\contf}{s_l} \conc
\setfield{s_j}{\nxtf}{s_{j+1}}
 & \mif 1 \leq l \leq k\;, \\
\rho_j(\setr{:}i{:}l) & = &
\addfield{s_j}{\regf} \conc \addfield{s_j}{\contf} \conc
\addfield{s_j}{\nxtf} \conc
 {} \\ & &
\setfield{s_j}{\regf}{s'_i} \conc \setfield{s_j}{\contf}{s_{k+2}} \conc
\setfield{s_j}{\nxtf}{s_{j+1}}
 & \mif \Not (1 \leq l \leq k)\;, \\
\rho_j(\iajmp{i}) & = &
\addfield{s_j}{\ajmpf} \conc \setfield{s_j}{\ajmpf}{s'_i}
\end{aceqns}
\end{ldispl}
and
\begin{iteml}
\item
$f_1,\ldots,f_n \in \Foci$ are the different foci that occur in
$u_1 \conc \ldots \conc u_k$;
\item
$m_1,\ldots,m_{n'} \in \Meth$ are the different methods that occur in
$u_1 \conc \ldots \conc u_k$;
\item
$s,s_1,\ldots,s_{k+2},s'_1,\ldots,s'_I \in
 \Spot \diff (\Foci \union \Meth)$.
\end{iteml}

The following is the \PGA\ program for interpreting \PGLDij\ programs
represented by molecules:
\begin{ldispl}
\begin{aeqns}
(\ptst{\hasfield{s}{\stopf}} \conc \halt \conc
{} \\ \phantom{(}
 \ptst{\hasfield{s}{\ajmpf}} \conc \fjmp{16} \conc
{} \\ \phantom{(}
 \ptst{\hasfield{s}{\regf}} \conc \fjmp{9} \conc
{} \\ \phantom{(}
 \getfield{u}{s}{\focusf} \conc \getfield{v}{s}{\methodf} \conc
 \ptst{\genact{u}{v}} \conc \fjmp{3} \conc
  \getfield{s}{s}{\negf} \conc \fjmp{9} \conc
  \getfield{s}{s}{\posf} \conc \fjmp{7} \conc
{} \\ \phantom{(}
 \getfield{u}{s}{\regf} \conc \getfield{v}{s}{\contf} \conc
 \setfield{u}{\ajmpf}{v} \conc
 \getfield{s}{s}{\nxtf} \conc \fjmp{2} \conc
{} \\ \phantom{(}
 \getfield{s}{s}{\ajmpf}) \rep\;,
\end{aeqns}
\end{ldispl}
where $u,v \in \Spot \diff (\Foci \union \Meth)$.
Below, we write $I'$ for this \PGA\ program.

Theorem~\ref{theorem-correctness-ij} below states rigorously that
program $I'$ interprets \PGLDij\ programs correctly.
In that theorem, other than in Theorem~\ref{theorem-correctness}, we
write $\MDS_{\repr(P)}$ for
$\apply{\extr{\pgldijmd(P)}}{\md}{\MDS_\init}$.
\begin{theorem}
\label{theorem-correctness-ij}
For all \PGLDij\ programs $P$:
\begin{ldispl}
\extr{P}_\sPGLDij =
\abstr(\use{\extr{I'}}{\md}{\MDS_{\repr(P)}})\;.
\end{ldispl}
\end{theorem}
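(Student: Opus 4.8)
The strategy mirrors the proof of Theorem~\ref{theorem-correctness} as closely as possible, reducing the correctness of $I'$ for \PGLDij\ to the already-established correctness of $I$ for \PGLD. By definition, $\extr{P}_\sPGLDij = \abstr(\use{\extr{\pgldijpgld(P)}_\sPGLD}{\rf}{\RF_\mathrm{init}})$, and $\extr{\pgldijpgld(P)}_\sPGLD$ equals, by Theorem~\ref{theorem-correctness} applied to the \PGLD\ program $\pgldijpgld(P)$, the thread $\abstr(\use{\extr{I}}{\md}{\MDS_{\repr(\pgldijpgld(P))}})$ (after handling the fact that the foci $\rf$ and $\md$ are distinct, so the two use operators commute). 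Thus the left-hand side is fully determined by how $\pgldijpgld$ translates indirect jumps and register sets, together with the interaction with $\RF_\mathrm{init}$. The task is to show this coincides with $\abstr(\use{\extr{I'}}{\md}{\MDS_{\repr(P)}})$ directly.

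\textbf{Key steps.} First I would set up the same machinery as in Theorem~\ref{theorem-correctness}: an auxiliary function $\extr{i,P}$ giving the behaviour of a \PGLDij\ program $P$ executed from position $i$, together with the evident unfolding equations for the cases $u_i = a$, $u_i = \ptst{a}$, $u_i = \ntst{a}$, $u_i = \ajmp{l}$, $u_i = \setr{:}i'{:}n$ (which advances to $i+1$ while changing the register file state), and $u_i = \iajmp{i'}$ (which jumps to position $l$ where $l$ is the current content of register $i'$, clamped appropriately). Second, I would establish a representation lemma analogous to Proposition~\ref{prop-representation} for $\pgldijmd$, showing that from the molecule built by $\pgldijmd(P)$ one can read off each instruction $u_j$ — including the new cases where $s_j$ carries the fields $\regf$, $\contf$, $\nxtf$ (a register-set atom) or where $s_j$ carries $\ajmpf$ pointing at an auxiliary atom $s'_{i'}$ (an indirect-jump atom) — and moreover that the auxiliary atoms $s'_1,\ldots,s'_I$ represent the current register file contents via their $\ajmpf$ fields (initially all pointing at $s_{k+2}$, i.e.\ the "stop" atom, matching the initial register content $0$). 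Third, I would prove the analogue of Proposition~\ref{prop-interpretation}, namely that the local scratch spots $u,v$ in $I'$ may be cleared without affecting $\abstr(\use{\extr{I'}}{\md}{\cdots})$; this is again a routine \AIP\ argument. Finally, with appropriate abbreviations $\MDS_{\repr(i,P)}$, I would exhibit a bijection $\beta$ between the finite set of threads $\set{\extr{i,P}}$ and the set $\set{\abstr(\use{\extr{I'}}{\md}{\MDS_{\repr(i,P)}})}$ and verify that both families satisfy the same guarded recursive specification, so that uniqueness of solutions gives the result. The crucial new verification is that one cycle of $I'$ on an atom representing $\iajmp{i'}$ behaves like a direct jump to $s'_{i'}$ and then like a direct jump along the $\ajmpf$ field of $s'_{i'}$ — i.e.\ to the atom for position $l$ — which matches the semantics of $\iajmp{i'}$; and that one cycle of $I'$ on an atom representing $\setr{:}i'{:}n$ reads $\regf$ (getting $s'_{i'}$), reads $\contf$ (getting $s_n$, or $s_{k+2}$ when $n > k$), performs $\setfield{u}{\ajmpf}{v}$ to overwrite the $\ajmpf$ field of $s'_{i'}$ so that it now points at $s_n$, then follows $\nxtf$ to $s_{j+1}$ — matching both the state change (register $i'$ becomes $n$) and the control flow (advance to position $j+1$) of $\setr{:}i'{:}n$.

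\textbf{Main obstacle.} The principal difficulty is bookkeeping the correspondence between register file states on the \PGLD\ side and the configuration of the auxiliary atoms $s'_1,\ldots,s'_I$ on the molecular side: one must show that executing $\pgldijpgld(P)$ against $\RF_\mathrm{init}$ and executing $I'$ against $\MDS_{\repr(P)}$ keep these two representations in lockstep through every register-set and every indirect jump, which requires strengthening the induction hypothesis in the final bijection argument to track not just the current instruction position but also the current contents of all registers. A secondary, more technical point is verifying that the direct-jump chase inside $\pgldijpgld$'s linear-search block (the instructions $\ptst{\rf.\eqr{:}i'{:}1} \conc \ajmp{1} \conc \ldots$) produces exactly the same destination as the single $\getfield{s}{s}{\ajmpf}$ step in $I'$ applied to $s'_{i'}$; this hinges on the $\min(k,\maxn)$ clamping being consistent between the $\psi$ translation and the $\rho_j$ encoding of register-set instructions, and between the "$l > k$" case mapping to $s_{k+2}$ on the molecular side and to $\ajmp{0}$ (termination) on the \PGLD\ side. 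Modulo these, the proof is a direct, if lengthy, adaptation of the proof of Theorem~\ref{theorem-correctness}.
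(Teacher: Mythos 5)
Your proposal is correct and follows essentially the same route as the paper: the paper's own proof is just the remark that the argument runs along the same lines as that of Theorem~\ref{theorem-correctness}, the essential point being that every change of the register file state is mirrored by a corresponding change of the molecular representation (the $\ajmpf$ fields of the auxiliary atoms $s'_1,\ldots,s'_I$). Your sketch — auxiliary function $\extr{i,P}$, analogues of Propositions~\ref{prop-representation} and~\ref{prop-interpretation}, and a guarded-recursive-specification argument strengthened to track register contents — is a faithful elaboration of exactly that line.
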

\begin{proof}
The proof follows the same line as the proof of
Theorem~\ref{theorem-correctness}.
This is possible because on interpretation any change of the state of
the register file is reflected by a corresponding change of its
molecular representation.
\qed
\end{proof}

\section{Conclusions}
\label{sect-concl}

In this paper, we have considered the programming of an interpreter for
a program notation that is close to existing assembly languages using
\PGA\ with the primitives of molecular dynamics as basic instructions.
We have given \PGA\ programs for constructing representations of the
programs to be interpreted by molecules and a \PGA\ program for
interpreting those representations and we have shown that the latter
\PGA\ program does the interpretation correctly.
We have experienced that, although primarily meant for explaining
programming language features relating to the use of dynamic data
structures, the collection of primitives of molecular dynamics in itself
is suited to the programming wants concerned.

We observe that:
(i)~the program notation for which the presented interpreter has been
designed belongs to the simplest program notations devised ever,
(ii)~it is hard to imagine that the programs to be interpreted can be
represented by molecules in a way that is simpler than the way chosen
for the presented interpreter,
(iii)~it is hard to conceive of an interpreter that is simpler than the
presented interpreter.
This means not at all that the design of the interpreter was simple.
On the contrary, the design turned out to be disappointingly difficult.
It happened that, owing to the quest for a simple interpreter, it was
inescapable that the design was to a great extent a trial-and-error
matter.

Dynamic data structures modelled using molecular dynamics can
straightforwardly be implemented in programming languages ranging from
PASCAL~\cite{Wir71a} to C\#~\cite{HWG03a} through pointers or
references, provided that fields are not added or removed dynamically.
Using molecular dynamics, we need not be aware of the existence of the
pointers used for linking data.
The name molecular dynamics refers to the molecule metaphor used in the
introduction.
By that, there is no clue in the name itself to what it stands for.
To remedy this defect, we suggest data linkage dynamics as an
alternative name.

\bibliographystyle{plain}
\bibliography{TA}


\end{document}